\newtheorem{theorem}{Theorem}
\newtheorem{lemma}{Lemma}
\newtheorem{proposition}{Proposition}
\newtheorem{corollary}{Corollary}
\theoremstyle{definition}
\newtheorem{definition}{Definition}
\newtheorem{remark}{Remark}
\newtheorem{assumption}{Assumption}
\newcommand{\argmin}{\mathop{\arg\min}}
\newcommand{\diverge}{\to\infty}
\newcommand{\iiddistr}{{\stackrel{\text{\iid}}{\sim}}}
\newcommand{\reals}{{\mathbb{R}}}
\newcommand{\identity}{\mathbf I}
\newcommand{\expect}[1]{\mathbb{E}\left[ #1 \right]}
\newcommand{\prob}[1]{ \mathbb{P}\left\{ #1 \right\} }
\newcommand{\toas}{\xrightarrow{{\rm a.s.}}}
\newcommand{\Binom}{{\rm Binom}}
\newcommand{\eg}{e.g.\xspace}
\newcommand{\ie}{i.e.\xspace}
\newcommand{\iid}{i.i.d.\xspace}
\newcommand{\pth}[1]{\left( #1 \right)}
\newcommand{\iprod}[2]{\left \langle #1, #2 \right\rangle}
\newcommand{\Iprod}[2]{\langle #1, #2 \rangle}
\newcommand{\indc}[1]{{\mathbf{1}_{\left\{{#1}\right\}}}}
\newcommand{\calA}{{\mathcal{A}}}
\newcommand{\calB}{{\mathcal{B}}}
\newcommand{\calE}{{\mathcal{E}}}
\newcommand{\calF}{{\mathcal{F}}}
\newcommand{\calN}{{\mathcal{N}}}
\newcommand{\calS}{{\mathcal{S}}}
\newcommand{\calV}{{\mathcal{V}}}
\newcommand{\calX}{{\mathcal{X}}}
\newcommand{\median}{\mathsf{med}}
\renewcommand{\hat}{\widehat}
\renewcommand{\tilde}{\widetilde}
\newcommand{\vct}[1]{\bm{#1}}
\begin{document}

\title{Distributed Statistical Machine Learning in Adversarial Settings:
Byzantine Gradient Descent}

\date{\today}
\author{
Yudong Chen \\
Cornell \\
{yudong.chen@cornell.edu }
\and
Lili Su \\
UIUC\\
{lilisu3@illinois.edu}
\and
Jiaming Xu  \\
Purdue\\
{xu972@purdue.edu}}

\maketitle

\begin{abstract}

We consider the distributed statistical learning problem over decentralized systems that are prone to adversarial attacks. This setup arises in many practical applications, including Google's {\em Federated Learning}. Formally, we focus on a decentralized system that consists of a parameter server and $m$ working machines; each working machine keeps $N/m$ data samples, where $N$ is the total number of samples. In each iteration, up to $q$ of the $m$ working machines suffer Byzantine faults -- a faulty machine in the given iteration behaves arbitrarily badly against the system and has complete knowledge of the system. 
Additionally, the sets of faulty machines may be different across iterations.
%
Our goal is to design robust algorithms such that the system can learn the underlying true parameter,
which is of dimension $d$, despite the interruption of the Byzantine attacks.

In this paper, based on the \emph{geometric median of means} of the gradients, we propose a simple variant of the classical gradient descent method.
We show that our method can tolerate $q$ Byzantine failures up to $2(1+\epsilon)q \le m$ for an arbitrarily small but fixed constant $\epsilon>0$.
The parameter estimate converges in $O(\log N)$ rounds with an estimation error on the order of $\max\{\sqrt{dq/N}, ~\sqrt{d/N}\}$, which is larger than the
minimax-optimal error rate $\sqrt{d/N}$  in the centralized and failure-free setting by at most a factor of $\sqrt{q}$.
The total computational complexity of our algorithm is of $O((Nd/m) \log N)$ at each working machine and $O(md + kd \log^3 N)$ at the central server, and the total communication cost is of $O(m d \log N)$.
We further provide an application of our general results to the linear regression problem.

A key challenge arises in the above problem is that Byzantine failures create arbitrary and unspecified dependency among the iterations and the aggregated gradients.
To handle this issue in the analysis, we prove that the aggregated gradient, as a function of model parameter, converges \emph{uniformly} to the true gradient function.

\end{abstract}

\section{Introduction}
\label{intro}
%
Distributed machine learning has emerged as an attractive solution to large-scale problems and received intensive attention~\cite{boyd2011distributed,jordan2016communication,moritz2015sparknet,provost1996scaling,dean2008mapreduce,low2012distributed}.
In this setting, the data samples or/and computation are distributed across multiple machines, which are programmed to collaboratively learn a model. 
Many efficient distributed machine learning algorithms \cite{boyd2011distributed,jordan2016communication} and system implementations~\cite{moritz2015sparknet,provost1996scaling,dean2008mapreduce,low2012distributed} have been proposed and studied.
Prior work mostly focuses on the traditional ``training within cloud'' framework where
the model training process is carried out within the cloud infrastructures.
In this framework,  distributed machine learning is secured via system architectures, hardware devices, and monitoring \cite{kaufman2002network,Pfleeger:2002:SC:579149,wang2010privacy}. This framework faces significant privacy risk, as the data
has to be collected from owners and stored within the clouds.  Although a variety of privacy-preserving solutions have been developed \cite{agrawal2000privacy,duchi2013local}, 
privacy breaches occur frequently, with recent examples including iCloud leaks of celebrity photos and PRISM surveillance program.

To address privacy concerns\footnote{We would like to characterize the amount of privacy sacrificed in the Federated Learning paradigm. We leave this characterization as one of our future work.} , a new machine learning paradigm called {\em Federated Learning} was proposed by Google researchers \cite{konevcny2015federated,federatedlearningblog}. It aims at learning an accurate model  without collecting data from owners and storing the data in the cloud.
The training data is kept locally on the owners' computing devices, which are recruited to participate directly in the model training process and hence function as working machines. Google has been intensively testing this new paradigm in their recent projects such as {\em Gboard}~\cite{federatedlearningblog}, the Google Keyboard. Compared to ``training within cloud'', Federated Learning faces the following three key challenges:
\begin{itemize}
\item Security:  The devices of the recruited data owners can be easily reprogrammed and completely controlled by external attackers, and thus behave adversarially.
\item Small local datasets versus high model complexity: While the total number of data samples over all data owners may be large, each individual owner
may keep only a small amount of data, which by itself is insufficient for learning a complex model.

\item Communication constraints:  Data transmission between the recruited devices and the cloud may suffer from high latency and low-throughout. Communication between them is therefore a scarce resource. 

\end{itemize}

In this paper, we address the above challenges by developing a simple variant of the gradient descent method that can
(1) tolerate the arbitrary and adversarial failures, (2) accurately learn a highly complex model with low local data volume, and
(3) converge exponentially fast using logarithmic communication rounds. Since gradient descent
algorithms  are well-adopted in existing implementations and applications, our proposed method only requires
a small amount of modification of existing codes.

\vskip 0.8\baselineskip

Note that there are many other challenges besides what are listed here, including unevenly distributed training data, intermittent availability of mobile phones, etc. These challenges will be addressed in future work.

\subsection{Learning Goals}
To formally study the distributed machine learning problem in adversarial settings, we consider a standard statistical learning setup, where the
data is generated probabilistically from an unknown distribution and the true model is parameterized by a vector.
More specifically, let $X \in \calX $ be the input data generated according to some {\em unknown} distribution $\mu$. 
Let $\Theta \subset \reals^d$ be the set of all choices of model parameters.
We consider a loss function $f: \calX \times \Theta \to \reals$, where $f(x, \theta)$ measures the risk induced by a realization $x $ of the data under the model parameter choice $\theta$. A classical example
is linear regression, where $ x = (w, y) \in \reals^{d} \times \reals $ is the feature-response pair and $f(x, \theta) = \frac{1}{2}
\left( \iprod{w}{\theta} - y \right)^2 $ is the usual squared loss.

We are interested in learning the model choice $\theta^*$ that minimizes the {\em population risk}, i.e., 
\begin{align}
\label{eq:min_pop_risk}
\theta^* \in \arg \min_{\theta \in \Theta} F(\theta) \triangleq \expect{f(X, \theta)},
\end{align}
assuming that $\expect{ f( X, \theta)}$ is well defined over $\Theta$.\footnote{For example, if $\expect{| f( X, \theta)|}$ is finite for every $\theta \in \Theta$, the population risk $\expect{f(X, \theta)}$ is well defined.} The model choice $\theta^*$ is optimal in the sense that it minimizes the average risk to pay if the
model chosen is used for prediction in the future with a fresh random data sample.

When $\mu$---the distribution of $X$---is known, which is rarely the case in practice, the population risk can be evaluated exactly, and $\theta^*$ can be computed by solving the minimization problem in \eqref{eq:min_pop_risk}. We focus on the more realistic scenario where $\mu$ is  {\em unknown} but there exist $N$ independently and identically distributed data samples $X_i \iiddistr \mu$ for $i = 1,\ldots, N$. Note that estimating $ \theta^* $ using finitely many data samples will always have a \emph{statistical error} due to the randomness in the data, even in the centralized, failure-free setting.  Our results account for this effect.

\subsection{System Model}
We focus on solving the above statistical learning problem over decentralized systems that are prone to adversarial attacks. Specifically, the system of interest consists of a parameter server\footnote{Note that, due to communication bandwidth constraints, practical systems use multiple networked parameter servers. In this paper, for ease of explanation, we assume there is only one parameter server. Fortunately, as can be seen from our algorithm descriptions and our detailed correctness analysis, the proposed algorithm also works for the multi-server setting.} and $m$ working machines. In the example of Federated Learning,
the parameter server represents the cloud, and the $m$ working machines correspond to $m$ data owners' computing devices.

We assume that the $N$ data samples are distributed evenly across the $m$ working machines. In particular, each working machine $i$ keeps a subset $\calS_i$ of the data, where  $\calS_i\cap \calS_j=\emptyset$ and $|\calS_i|=N/m.$ Note that this is a simplifying assumption of the data imbalance in Federated Learning. Nevertheless, our results can be extended to the heterogeneous data sizes setting when the data sizes are of the same order.
We further assume that the parameter server can communicate with all working machines in synchronous communication rounds, and leave the asynchronous setting as future directions. 

Among the $m$ working machines, we assume that up to $q$ of them can suffer Byzantine failures and thus behave maliciously; for example, they may be reprogrammed and completely controlled by the system attacker. 
We assume the parameter server knows $q$ -- as $q$ can be estimated from the existing system failures statistics.
The set of Byzantine machines can {\em change} between communication rounds; the system attacker can choose different sets of machines to control across communication rounds.
Byzantine faulty machines are assumed to have {\em complete knowledge} of the system, including the total number of working machines $m$, all $N$ data samples over the whole system, the programs that the working machines are supposed to run, the program run by the parameter server, and the realization of the random bits generated by the parameter server. Moreover, Byzantine machines can collude ~\cite{Lynch:1996:DA:2821576}. The only constraint is that these machines cannot corrupt the local data on working machines --- but they can lie when communicating with the server. In fact, our main results show that our proposed algorithm still works when at most $q$ different machines with local data corrupted during its execution. 

We remark that Byzantine failures are used to capture the unpredictability of extremely large system that consists of heterogeneous processes, as is the case with Federated Learning. The arbitrary behavior of Byzantine machines creates unspecified dependency across communication rounds --- a key challenge in our algorithm design and convergence analysis. In this paper, we use {\em rounds} and {\em iterations} interchangeably.


\subsection{Existing Distributed Machine Learning Algorithms}
There are three popular classes of existing distributed machine learning algorithms
in terms of their communication rounds. \\

\noindent {\bf SGD: } On one end of the spectrum lies the {\em Stochastic Gradient Descent (SGD)} algorithm. Using this algorithm, the parameter server receives, in each iteration, a gradient computed at a single data sample from one  working machine, and uses it to perform one gradient descent step. Even when the population risk $F$ is strongly convex,  the convergence rate of SGD is only $O(1/t)$ with $t$  iterations. This is much slower than the exponential (geometric) convergence of standard gradient descent.
Therefore, SGD requires a large number of communication rounds, which could be costly. Indeed, it has been demonstrated in~\cite{federatedlearningblog} that  SGD
has 10-100 times higher communication cost than standard gradient descent, and
is therefore inadequate for scenarios with scarce communication bandwidth. \\

\noindent {\bf One-Shot Aggregation: }  On the other end of the spectrum, using a {\em One-Shot Aggregation} method, each working machine computes an estimate of the model parameter using only its local data and reports it to the server, which then aggregates all the estimates reported  to obtain a
final estimate~\cite{JMLR:v14:zhang13b,zhang2015divide}. One-shot aggregation method only needs a single round of communication from the working machines to the parameter
server, and thus is communication-efficient. However, it requires $N/m\gg d$ 
so that a coarse parameter estimate can be obtained at each machine. This algorithm is therefore not applicable in scenarios where local data is small in size but the model to learn is of high dimension. \\

\noindent {\bf BGD: }  {\em Batch Gradient Descent (BGD)} lies in between the above two extremes. At each iteration,  the parameter server sends
the current model parameter estimate to all working machines. Each working machine computes the gradient based on all locally available data, and then sends the gradient back to the parameter server. The parameter server averages the received gradients and performs a gradient descent step.
When the population risk $F$ is strongly convex, BGD converges
exponentially fast, and hence requires only a few rounds of communication. BGD also works in the scenarios with
limited local data, \ie, $N/m =O(d)$, making it an ideal candidate in Federated Learning.
However, it is sensitive to Byzantine failures;
a single Byzantine
failure at a working machine can completely skew the average value of the gradients received by the parameter server, and thus foils the algorithm.

\subsection{Contributions}
In this paper, we propose a Byzantine gradient descent method.
Specifically,  the parameter server aggregates the local gradients reported by the working machines in three steps: (1) it partitions all the received local gradients into $k$ batches and computes the mean for each batch, (2) it computes the \emph{geometric median} of the $k$ batch means, and (3) it performs a gradient descent step using the geometric median.

We prove that the proposed algorithm can tolerate $q$ Byzantine failures up to $2(1+\epsilon)q \le m$ for an arbitrarily small but fixed constant $\epsilon>0$.
Moreover, the error in estimating the target model parameter $\theta^*$
converges in $\log (N)$ communication rounds  to the order of  $\max\{\sqrt{dq/N},~ \sqrt{d/N}\}$, whereas the minimax-optimal
estimation error rate in the centralized and failure-free setting is
$\sqrt{d/N}$.\footnote{Note that
$\sqrt{d/N}$ is the minimax optimal estimation error rate even in the centralized, failure-free setting
when we would like to estimate a $d$-dimensional unknown parameter without any additional
structure from $N$ i.i.d.\ samples,
see \eg, \cite[Section 3.2]{YW-ITSTATS} for a proof in the special case of Gaussian mean estimation. When there is additional structure, say sparsity, then the $\sqrt{d}$ factor can possibly be improved.} Even in the scarce {\em local} data regime where $N/m = O(d)$, the estimator of our proposed algorithm is still consistent as long as $N/q=\omega(d)$.
The total computational complexity of our algorithm is of $O((N/m) d \log N)$ at each worker and
$O(md + qd \log^3 N)$ at the parameter server, and the total communication cost is of $O(m d \log N)$. Note that the $\sqrt{q}$ factor in our estimation error rate $\max\{\sqrt{dq/N},~ \sqrt{d/N}\}$ may not be fundamental to the problem of learning in adversarial settings. Thus, it may possibly be improved with better algorithms or finer analysis.



A key challenge in our analysis is that there exists complicated probabilistic dependency among the iterates and the aggregated gradients.  Even worse, such dependency cannot be specified due to the arbitrary behavior of the Byzantine machines.
We overcome this  challenge by
proving that the geometric median of means of gradients
\emph{uniformly} converges to the true gradient function $\nabla F(\cdot)$.


\subsection{Outline}
The origination of the paper is as follows. In Section \ref{sec: algorithm and summary}, we present our algorithm, named {\em Byzantine Gradient Descent Method}, and summarize our convergence results. Detailed convergence analysis can be found in Section \ref{sec: analysis}. To illustrate the applicability of our convergence results, we provide a linear regression example in Section \ref{sec: linear regression}.  Related work is discussed in Section \ref{sec: related work}.  Section \ref{sec: discussion} concludes the paper, and presents several interesting future directions.

\section{Algorithms and Summary of Convergence Results}
\label{sec: algorithm and summary}
In this section, we present our distributed statistical machine learning algorithm, named {\em Byzantine Gradient Descent Method}, and briefly summarize our convergence results on its performance. 


\subsection{Byzantine Gradient Descent Method}
Recall that our fundamental goal is to learn the optimal model choice $\theta^*$ defined in \eqref{eq:min_pop_risk}.  We make the following standard assumption \cite{boyd2011distributed} so that the minimization problem in \eqref{eq:min_pop_risk} can be solved efficiently (exponentially fast) in the ideal case when the population risk function $F$ is known exactly, i.e., the distribution $\mu$ is known.
\begin{assumption}\label{ass:pop_risk_smooth}
The population risk function $F: \Theta \to \reals$ is $L$-strongly convex, and differentiable over $\Theta$ with $M$-Lipschitz gradient. That is, for all $\theta,\theta' \in \Theta$,
\begin{align*}
&F(\theta' ) \ge F(\theta) + \iprod{\nabla F(\theta)}{\theta'-\theta} + \frac{L}{2} \| \theta'-\theta\|^2,
\end{align*}
and
\begin{align*}
\| \nabla F(\theta) - \nabla F(\theta') \| \le M \| \theta-\theta'\|.
\end{align*}
\end{assumption}
Under Assumption \ref{ass:pop_risk_smooth}, it is well-known \cite{boyd2004convex} that using the standard gradient descent update %
\begin{align}
\theta_t= \theta_{t-1} -\eta \times \nabla F(\theta_{t-1}),\label{eq:pop_gd o}
\end{align}
where $\eta$ is some fixed stepsize, $\theta_t$ approaches $\theta^*$ exponentially fast. In particular, choosing $\eta=L/(2M^2)$, it holds that
$$
\| \theta_t-\theta^* \| \le \pth{1-\pth{\frac{L}{2M}}^2}^{t/2} \| \theta_0 -\theta^*\|.
$$
Nevertheless, when the distribution $\mu$ is unknown, as assumed in this paper, the population gradient $ \nabla F$ can only be approximated using sample gradients, if they exist.

Recall that each working machine $j$ (can possibly be Byzantine) keeps a very small set of data $\calS_j$ with $|\calS_j|=N/m$. Define the local empirical risk function, denoted by $\bar{f}^{(j)}: \Theta \to \reals$, as follows:
\begin{align}
\label{local obj}
\bar{f}^{(j)} ( \theta) \triangleq  \frac{1}{ |\calS_j| } \sum_{ i \in \calS_j } f(X_i, \theta), ~~~ \forall \, \theta\in \Theta.
\end{align}
Notice that $\bar{f}^{(j)} (\cdot)$ is a function of data samples $\calS_j$ stored at machine $j$. Hence $\bar{f}^{(j)} (\cdot)$ is random. Although Byzantine machines can send arbitrarily malicious messages to the parameter server, they are unable to corrupt the local stored data. Thus, the local risk function $\bar{f}^{(j)}(\cdot)$ is well-defined for all $j$, including the Byzantine machines. With a bit of abuse of notation, we let
\begin{align*}
\vct{\bar{f}}(\theta)\triangleq \left(\bar{f}^{(1)} (\theta), \ldots, \bar{f}^{(m)}(\theta) \right)
\end{align*}
be the vector that stacks the values of the $m$ local functions evaluated at $\theta$.
For any $x\in \calX$, we assume that $f(x, .): \Theta \to \reals$  is differentiable. When there is no confusion, we write $\nabla_\theta f(x, \theta)$ -- the gradient of function $f(x, \cdot)$ evaluated at
$\theta$ -- simply as $\nabla f(x, \theta)$.

It is well-known that the average of the local gradients can be viewed as an approximation of the population gradient $\nabla F(\cdot)$. In particular, for a fixed $\theta$, as $N \diverge$
\begin{align}
\label{gradient app ave}
\frac{1}{m} \sum_{j=1}^m \nabla \bar{f}^{(j)} ( \theta) =  \frac{1}{N} \sum_{i=1}^N \nabla f(X_i, \theta)  ~ \toas ~  \nabla F(\theta).
\end{align}
Batch Gradient Descent relies on this observation. However, this method is sensitive to Byzantine failures as we explain next.

\paragraph{\bf  Batch Gradient Descent}
We describe the {\em Batch Gradient Descent (BGD)} in Algorithm \ref{alg:SGD}.
We initialize $\theta_0$ to be some arbitrary value in $\Theta$ for simplicity.
In practice, there are standard guides in choosing the initial point \cite{sherali1998network}. In round $t\ge 1$, the parameter server sends the current model parameter estimator $\theta_{t-1}$ to all working machines.
Each working machine $j$ computes
the gradient $\nabla \bar{f}^{(j)}(\theta_{t-1})$ and sends $\nabla \bar{f}^{(j)}(\theta_{t-1})$ back
to the parameter server. Note that Byzantine machines may not follow the codes in Algorithm \ref{alg:SGD}. 
Instead of the true local gradients, Byzantine machines can report arbitrarily malicious messages or no message to the server.
If the server does not receive any message from a working machine,
then that machine must be Byzantine faulty.
In that case, the server sets $g^{(j)}_t(\theta_{t-1})$ to some arbitrary value.
Precisely, let $\calB_t$ denote the set of Byzantine machines at round $t$ in a given execution. 
The message received from machine $j$, denoted by $g^{(j)}_t (\theta_{t-1})$, can be described as
\begin{align}
\label{received gradients}
g^{(j)}_t (\theta_{t-1}) = \begin{cases}
\nabla \bar{f}^{(j)}(\theta_{t-1}) & \text{ if } j \notin \calB_t \\
\star & \text{ o.w. },
\end{cases}
\end{align}
where, with a bit of abuse of notation, $\star$ denotes the arbitrary message whose value may be different across Byzantine machines, iterations, executions, etc. In step 3, the parameter server averages the received $g^{(j)}_t (\theta_{t-1})$ and updates $\theta_{t}$ using a gradient descent step.
\begin{algorithm}[htb]
\caption{Standard Gradient Descent: Iteration $t\ge 1$}\label{alg:SGD}
\begin{center}
{\bf \color{blue}{\em Parameter server:}}
\end{center}

\begin{algorithmic}[1]
\STATE {\em Initialize:} Let $\theta_0$ be an arbitrary point in $\Theta$.
\STATE Broadcast the current model parameter estimator $\theta_{t-1}$ to all working machines;
\STATE Wait to receive all the gradients reported by the $m$ machines; Let $g^{(j)}_t(\theta_{t-1})$ denote the value received from machine $j$. \\
If no message from machine $j$ is received, set $g^{(j)}_t(\theta_{t-1})$ to be some arbitrary value;
\STATE  Update: $ \theta_{t} ~ \gets ~ \theta_{t-1}   -\eta \times \pth{\frac{1}{m} \sum_{j=1}^m g_t^{(j)}(\theta_{t-1}) }$;
\end{algorithmic}

\vskip \baselineskip

\begin{center}
{\bf \color{blue}{\em Working machine $j$:}}
\end{center}
\begin{algorithmic}[1]
\STATE Compute the gradient $\nabla \bar{f}^{(j)}(\theta_{t-1})$;
\STATE Send $\nabla \bar{f}^{(j)}(\theta_{t-1})$ back to the parameter server;
\end{algorithmic}
\end{algorithm}

Under Assumption \ref{ass:pop_risk_smooth}, when there are no Byzantine machines, it is well-known that BGD converges exponentially fast. However, a single Byzantine failure can completely skew the average value of the gradients received by the parameter server, and thus foils the algorithm. It is still the case even if the parameter server takes an average of
a randomly selected subset of received gradients. This is because a Byzantine machine is assumed to have complete knowledge of the system, including the gradients reported by other machines, and the realization of the random bits generated by the parameter server. \\

%
\paragraph{\bf Robust Gradient Aggregation}
Instead of taking the average of the received gradients
$$g_t^{(1)}(\theta_{t-1}), \cdots, g_t^{(m)}(\theta_{t-1}),$$
we propose a robust way to aggregate the collected gradients. Our aggregation rule is based on the notion of \emph{geometric median}.

Geometric median is a generalization of median in one-dimension to multiple dimensions, and has been widely used in robust
statistics~\cite{mottonen2010asymptotic,milasevic1987uniqueness,kemperman1987median,cardot2013efficient}.
Let $\{ y_1, \ldots, y_n \} \subseteq \reals^d$ be a multi-set of size $n$. The geometric median of $\{ y_1, \ldots, y_n \}$, denoted by $\median\{ y_1, \ldots, y_n \}$, is defined as
\begin{align}
\label{median}
\median \{ y_1, \ldots, y_n \} ~\triangleq ~\argmin_{y\in \reals^d}  \sum_{i=1}^n \| y - y_i \|.
\end{align}
Geometric median is NOT required to lie in $\{ y_1, \ldots, y_n \}$, and is unique unless all the points in $\{ y_1, \ldots, y_n \}$ lie on a line. Note that if the $\ell_2$ norm in \eqref{median} is replaced by the squared $\ell_2$ norm, i.e., $\| \cdot \|^2$, then the minimizer is exactly the average.

In one dimension, median has the following nice robustness property:
if strictly more than $\lfloor n/2 \rfloor$ points are in $[-r,r]$ for some $r\in\reals$,
then the median must be in $[-r,r]$.  
Likewise, in multiple dimensions, geometric
median has similar robust property \cite[Lemma 2.1]{minsker2015geometric}
\cite[Lemma 24]{cohen2016geometric}.
The following lemma shows that a $(1+\gamma)$-
approximate geometric median is also robust. Its
proof is a simple adaptation of the proof of Lemma 24
in \cite{cohen2016geometric} and presented in
Appendix~\ref{pf_geometric_median_robust}.

\begin{lemma}
\label{lmm:geometric_median_robust}
Let $z_1, \ldots, z_n $ denote $n$ points in a Hilbert space.
Let $z_\ast$ denote a $(1+\gamma)$-approximation of
 their geometric median, \ie, $ \sum_{i=1}^n \| z_\ast -z_i \|
 \le (1+\gamma) \min_{z} \sum_{i=1}^n \| z - z_i \| $ for $\gamma>0$.
For any $\alpha \in (0, 1/2)$ and given $r \in \reals$, if
$\sum_{i=1}^n \indc{\|z_i \|  \le r} \ge (1- \alpha) n$,
then
\begin{align*}
 \| z_\ast \|  & \le C_\alpha r + \gamma \frac{\min_{z} \sum_{i=1}^n  \| z - z_i \|  }{(1-2\alpha) n } \le C_\alpha r + \gamma \frac{\max_{1 \le i \le n} \|z_i \| }{1-2\alpha},
 \end{align*}
where
\begin{align}
C_\alpha = \frac{2(1-\alpha)}{1-2\alpha}.
\end{align}
\end{lemma}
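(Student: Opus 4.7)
\textbf{Proof proposal for Lemma~\ref{lmm:geometric_median_robust}.} The plan is a direct argument based on the triangle inequality, with the $(1+\gamma)$-approximation accounted for by a single extra slack term. Let $F(z) \triangleq \sum_{i=1}^n \|z - z_i\|$, let $z^\star$ be an exact geometric median (so $F(z^\star) = \min_z F(z)$), and let $I \triangleq \{i : \|z_i\| \le r\}$, so the hypothesis says $k \triangleq |I| \ge (1-\alpha)n$ and hence $|I^c| = n-k \le \alpha n$.

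First I would derive a clean lower bound on $F(z_\ast)$ purely in terms of $\|z_\ast\|$ and the $\|z_i\|$. For $i \in I$ apply $\|z_\ast - z_i\| \ge \|z_\ast\| - \|z_i\|$, and for $i \in I^c$ apply $\|z_\ast - z_i\| \ge \|z_i\| - \|z_\ast\|$ (both are triangle-inequality consequences, and the bounds are valid even when their right-hand sides are negative). Summing gives
\begin{equation*}
F(z_\ast) \;\ge\; (2k-n)\,\|z_\ast\| \;-\; \sum_{i \in I}\|z_i\| \;+\; \sum_{i \in I^c}\|z_i\| \;\ge\; (2k-n)\,\|z_\ast\| \;-\; kr \;+\; \sum_{i \in I^c}\|z_i\|.
\end{equation*}

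Next I would upper bound $F(z_\ast)$ through the approximation hypothesis and by comparing the exact median to the natural test point $0$:
\begin{equation*}
F(z_\ast) \;\le\; (1+\gamma)\,F(z^\star) \;\le\; F(0) + \gamma\, F(z^\star) \;=\; \sum_{i \in I}\|z_i\| + \sum_{i\in I^c}\|z_i\| + \gamma\, F(z^\star) \;\le\; kr + \sum_{i \in I^c} \|z_i\| + \gamma\, F(z^\star).
\end{equation*}
Combining the two displays cancels the ugly $\sum_{i\in I^c}\|z_i\|$ term, which is the crucial reason for carrying it through, and yields $(2k-n)\,\|z_\ast\| \le 2kr + \gamma\, F(z^\star)$, i.e.\ $\|z_\ast\| \le \frac{2k}{2k-n}\, r + \frac{\gamma\, F(z^\star)}{2k-n}$.

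Finally I would check that the coefficient $\tfrac{2k}{2k-n}$ is maximized at the smallest allowed value $k=(1-\alpha)n$ (its derivative in $k$ is $-2n/(2k-n)^2<0$), giving $\tfrac{2k}{2k-n} \le \tfrac{2(1-\alpha)}{1-2\alpha} = C_\alpha$ and $\tfrac{1}{2k-n} \le \tfrac{1}{(1-2\alpha)n}$. This proves the first inequality; the second follows immediately by bounding $F(z^\star) \le F(0) \le n\max_i \|z_i\|$. The only real obstacle is the bookkeeping in the lower bound on $F(z_\ast)$: one must remember to apply the triangle inequality in \emph{both} directions on the two subsets so that the unknown quantities $\sum_{i \in I^c}\|z_i\|$ cancel when combined with the $F(0)$ upper bound — otherwise $\gamma$-independent terms do not close up and one cannot recover the constant $C_\alpha$.
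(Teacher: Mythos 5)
Your proposal is correct and is essentially the paper's own argument: the paper likewise applies the triangle inequality in both directions on the sets $\{i:\|z_i\|\le r\}$ and its complement, compares against the test point $0$ (via $\sum_i\|z_i\|=F(0)\ge\min_z F(z)$) so that the $(1+\gamma)$ slack contributes only the $\gamma F(z^\star)$ term, and then bounds $2|S|/(2|S|-n)$ by $C_\alpha$ using $|S|\ge(1-\alpha)n$. The only cosmetic difference is that the paper folds the bound $\|z_i\|\le r$ into the per-term inequality ($\|z_\ast-z_i\|\ge\|z_\ast\|-2r+\|z_i\|$) rather than into the sum, which leads to the same cancellation.
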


The above lemma shows that as long as there are sufficiently many points (majority in terms of fraction) inside the
Euclidean ball of radius $r$ centered at origin, then the geometric median ($\gamma=0$)
must lie in the Euclidean ball blowed up by a constant factor only. Intuitively, geometric median can be viewed as an  aggregated center of a set based on majority vote.
Note that the exact geometric median may not be computed efficiently in practice. The above lemma further shows that
$(1+\gamma)$-approximate geometric median also lies in the Euclidean ball blowed up by a constant factor plus
a deviation term proportional to $\gamma$ and $\max_i \| z_i\|$.

Let
$
 \vct{g}_t (\theta_{t-1} )= \left( g^{(1)}_t (\theta_{t-1} ), \ldots, g^{(m)}_t(\theta_{t-1} ) \right)
$
be the vector that stacks the gradients received by the parameter server at iteration $t$.
Let $k$ be an integer which divides $m$ and let $b=m/k$ denote the batch size. In our proposed robust gradient aggregation, the parameter server (1) first divides $m$ working machines into $k$ batches, (2) then takes the average of local gradients in each batch, and (3) finally takes the geometric median of those $k$ batch means. With the aggregated gradient, the parameter server performs a gradient descent update.
\begin{algorithm}[htb]
\caption{Byzantine Gradient Descent: Iteration $t\ge 1$}\label{alg:BGD}
\vskip 0.2\baselineskip
\begin{center}
{\bf \color{blue}{\em Parameter server:}}
\end{center}

\begin{algorithmic}[1]
\STATE {\em Initialize:} Let $\theta_0$ be an arbitrary point in $\Theta$; group the $m$ machines into $k$ batches, with the $\ell$-th batch being $\{ (\ell-1) b +1, \ldots, \ell b\}$ for $1\le \ell \le k$.
\STATE Broadcast the current model parameter estimator $\theta_{t-1}$ to all working machines;
\STATE Wait to receive all the gradients reported by the $m$ machines; 
If no message from machine $j$ is received, set $\nabla \tilde{g_j}(\theta_{t-1})$ to be some arbitrary value; 
\STATE  {\em Robust Gradient Aggregation}
\begin{align}
\calA_k (  \vct{g}_t (\theta_{t-1})  ) \gets   \median \left\{  \frac{1}{b} \sum_{j=1}^{b}  g^{(j)}_t   (\theta_{t-1}),  \; \cdots, \;
\frac{1}{b} \sum_{j=n-b+1}^{n} g^{(j)}_t  (\theta_{t-1})\right\}. \label{eq:robust_aggregation}
\end{align}

\STATE Update: $ \theta_{t} ~ \gets ~ \theta_{t-1}   -\eta \times \calA_k \big(\;  \vct{g}_t  (\theta_{t-1}) \; \big)$;
\end{algorithmic}

\vskip \baselineskip
\begin{center}
{\bf \color{blue}{\em Working machine $j$:}}
\end{center}
\begin{algorithmic}[1]
\STATE Compute the gradient $\nabla \bar{f}^{(j)}(\theta_{t-1})$;
\STATE Send $\nabla \bar{f}^{(j)}(\theta_{t-1})$ back to the parameter server;
\end{algorithmic}
\end{algorithm}
Notice that when the number of batches $k=1$, the geometric median of means reduces to the average, \ie,
$$
\calA_1 \{ \vct{g}_t  (\theta_{t-1}) \}  = \frac{1}{m} \sum_{j=1}^m   g_t^{(j)} (\theta_{t-1}).
$$
When $k=m$, the median of means
reduces to the geometric median
$$
\calA_m \{\vct{g}_t  (\theta_{t-1}) \}  = \median \{ g_t^{(1)} (\theta_{t-1}), \ldots, g_t^{(m)} (\theta_{t-1})\}.
$$
Hence, the geometric median of means can be viewed as an interpolation
between the mean and the geometric median. Since the parameter server knows $q$ -- the upper bound on the number of Byzantine machines $q$, it can choose $k$ accordingly. 
We will discuss the choice of $k$ after the statement of our main theorem. 

\subsection{Summary of Convergence Results}
For ease of presentation, we present an informal statement of our main theorem. The precise statement and its proof are given in Section \ref{sec:main_formal}. Our convergence results hold under some technical assumptions on the sample gradients $\nabla f(X_i, \cdot)$, formally stated in Section \ref{sec:main_formal}. Roughly speaking, such assumptions mimic Assumption \ref{ass:pop_risk_smooth} (placed on population risk $F$), and can be viewed a stochastic version of strong convexity and Lipschitz-continuity conditions.
%

\begin{theorem}[Informal] \label{thm:main_informal}
Suppose some mild technical assumptions hold and $2(1+\epsilon)q \le k \le m $ for any arbitrary but fixed constant $\epsilon>0$.
Fix any fixed constant $\alpha \in (\frac{1}{2+2\epsilon},\frac{1}{2} )$ and any $\delta>0$ such that $\delta \le \alpha-q/k$.
There exist universal constants $c_1, c_2>0$ such that if $N/k \ge c_1 C_\alpha^2 \left( d \log (N/k) + \log(1/\delta) \right) $,
then with probability at least $$1-\exp\left(- k D\left( (\alpha -q/k) \| \delta \right) \right),$$ the iterates $\{\theta_t\}$
given by Algorithm \ref{alg:BGD} with $\eta=L/(2M^2)$ satisfy
\begin{align}
\label{informal convergence}
 \| \theta_t - \theta^* \|  & \le \left( \frac{1}{2} + \frac{1}{2} \sqrt{ 1- \frac{L^2}{4M^2} }\right)^t \| \theta_0 - \theta^* \|  + c_2 C_\alpha \sqrt{ \frac{k \left( d+ \log (1/\delta) \right) }{N} }, 
\end{align}
for $t\ge 1$,
where $D(\delta' \| \delta) = \delta' \log \frac{\delta'}{\delta} + (1-\delta') \log \frac{1-\delta'}{1-\delta}$ denotes the binary divergence.
\end{theorem}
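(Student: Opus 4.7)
The plan is to reduce the Byzantine analysis to a deterministic gradient descent analysis with bounded additive error, by first establishing a \emph{uniform} concentration statement: with probability at least $1 - \exp(-k D(\alpha - q/k \,\|\, \delta))$, the aggregated gradient obeys
\[
\sup_{\theta \in \Theta'} \bigl\| \calA_k(\vct{g}_t(\theta)) - \nabla F(\theta) \bigr\| \;\le\; C_\alpha \, r,
\qquad r \;=\; c_2 \sqrt{\tfrac{k(d + \log(1/\delta))}{N}},
\]
simultaneously for all $t$, where $\Theta'$ is a ball around $\theta^*$ that (inductively) contains all iterates. Uniformity is the whole point: since $\theta_{t-1}$ and the Byzantine messages $g^{(j)}_t(\theta_{t-1})$ have arbitrary history-dependent joint distributions, any pointwise concentration statement ``at $\theta = \theta_{t-1}$'' is useless; controlling $\calA_k(\vct{g}_t(\cdot))$ as a function of $\theta$ decouples the analysis from the adversary's strategy entirely.

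The first ingredient is pointwise concentration of a single \emph{clean} batch mean $\tfrac{1}{b}\sum_{j \in \text{batch}_\ell} \nabla \bar{f}^{(j)}(\theta)$. Each such batch is an i.i.d.\ average of $N/k$ sample gradients, so under the technical stochastic-convexity / sub-exponential assumptions it lies within $O(\sqrt{(d+\log(1/\delta'))/(N/k)})$ of $\nabla F(\theta)$ with probability $1-\delta'$. To promote this to uniformity over $\Theta'$, I would cover $\Theta'$ by a net of cardinality $(N/k)^{O(d)}$ at scale $1/\sqrt{N/k}$, apply a union bound over the net and over the $k$ batches, and extrapolate to all of $\Theta'$ using the Lipschitz continuity of $\nabla F$ and of $\nabla \bar{f}^{(j)}$ (which follows from the sample-wise Lipschitz-gradient assumption). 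This produces the radius $r$ above; the extra $\log(N/k)$ factor in the sample-complexity requirement $N/k \gtrsim C_\alpha^2(d\log(N/k)+\log(1/\delta))$ is exactly the covering penalty.

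The second ingredient is the counting step that feeds Lemma \ref{lmm:geometric_median_robust}. At iteration $t$, the Byzantine set $\calB_t$ can corrupt at most $q$ of the $k$ batches. Among the remaining $\ge k-q$ clean batches, the number that are \emph{not} within $r$ of $\nabla F(\theta)$ (uniformly in $\theta$) is stochastically dominated by $\Bern(k-q,\delta)$, and a Chernoff bound gives that this count exceeds $(\alpha - q/k)k$ with probability at most $\exp(-k D(\alpha - q/k \,\|\, \delta))$, which is where the binary-divergence factor in the theorem arises. On the complementary event, at least $(1-\alpha)k$ batch means lie in the $r$-ball around $\nabla F(\theta)$, so applying Lemma \ref{lmm:geometric_median_robust} to the shifted points $z_\ell := \tfrac{1}{b}\sum_{j\in \text{batch}_\ell} g_t^{(j)}(\theta) - \nabla F(\theta)$ yields $\|\calA_k(\vct{g}_t(\theta)) - \nabla F(\theta)\| \le C_\alpha r$.

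With this in hand, the iterate analysis is routine. Let $e_t := \calA_k(\vct{g}_t(\theta_{t-1})) - \nabla F(\theta_{t-1})$; then $\theta_t - \theta^* = (\theta_{t-1} - \theta^*) - \eta\nabla F(\theta_{t-1}) - \eta e_t$. Strong convexity plus smoothness with $\eta = L/(2M^2)$ contract the noiseless part as $\rho \|\theta_{t-1}-\theta^*\|$ for $\rho = \tfrac{1}{2} + \tfrac{1}{2}\sqrt{1 - L^2/(4M^2)}$ (a standard computation expanding $\|\cdot\|^2$ and using co-coercivity), and induction on $\|\theta_t - \theta^*\| \le \rho\|\theta_{t-1}-\theta^*\| + \eta\|e_t\|$ produces the geometric-sum bound in \eqref{informal convergence}; the same induction verifies that the iterates never leave $\Theta'$, closing the loop. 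The main obstacle, as noted, is the uniform convergence step — in particular, the net argument must be calibrated so that the Lipschitz extrapolation error is dominated by $r$, and the probability budget must be split carefully between the covering union bound (which costs an $e^{O(d\log(N/k))}$ factor) and the batch-counting bound (which must retain the clean $e^{-kD(\cdot\|\cdot)}$ form appearing in the statement).
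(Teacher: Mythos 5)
Your overall architecture matches the paper's: per-batch concentration boosted to uniformity by an $\epsilon$-net, a Binomial/Chernoff count of ``statistically good'' batches yielding the $\exp(-kD(\alpha-q/k\,\|\,\delta))$ factor, Lemma \ref{lmm:geometric_median_robust} applied to the centered batch means with a $+q$ slack to absorb whichever batches the adversary corrupts at each iteration, and a contraction recursion for the iterates. However, there is one genuine gap: your uniform concentration claim asserts a $\theta$-\emph{independent} deviation radius $r = c_2\sqrt{k(d+\log(1/\delta))/N}$ over the whole ball $\Theta'$ containing the iterates. This is false whenever $\Theta'$ has non-negligible radius: the fluctuation of a batch-averaged sample gradient around $\nabla F(\theta)$ necessarily grows linearly in $\|\theta-\theta^*\|$ (in linear regression, $\nabla f(X,\theta)-\nabla F(\theta) = (ww^\top - I)(\theta-\theta^*) - w\zeta$, whose standard deviation at $\theta$ is proportional to $\|\theta-\theta^*\|$ plus a constant). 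The paper's Assumption \ref{ass:gradient_sub_exp} quantifies exactly this, and its Proposition \ref{prop:uniform_converg} accordingly proves an \emph{affine} uniform bound $\|\,\cdot\,\| \le 8\Delta_2\|\theta-\theta^*\| + 4\Delta_1$, not a flat one. The multiplicative term $\xi_2\|\theta-\theta^*\|$ (with $\xi_2 = 8C_\alpha\Delta_2 = O(C_\alpha\sqrt{kd\log(N/k)/N})$) is then absorbed into the contraction rate via the condition $\rho = 1-\sqrt{1-L^2/(4M^2)} - \xi_2 L/(2M^2) > 0$ — this is precisely why the final contraction factor in the statement is the weaker $\tfrac12+\tfrac12\sqrt{1-L^2/(4M^2)}$ rather than $\sqrt{1-L^2/(4M^2)}$: the slack between the two is spent on $\xi_2$. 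Only the additive part $\xi_1 = 4C_\alpha\Delta_1$ enters the error floor $\eta\xi_1/\rho$.

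If you instead run your flat-radius argument honestly, the uniform deviation over a ball of radius $R$ is $O(\Delta_1 + \Delta_2 R)$, and your error floor becomes $O\bigl(C_\alpha(\Delta_1 + \Delta_2\,\|\theta_0-\theta^*\|)\bigr)$, which exceeds the claimed $c_2 C_\alpha\sqrt{k(d+\log(1/\delta))/N}$ by a factor of order $\|\theta_0-\theta^*\|\sqrt{\log(N/k)}$ — and the theorem permits $\Theta$ of radius $r\sqrt{d}$ with $\log r = O(d\log(N/k))$, so this loss can be enormous. The fix is exactly the paper's: state the uniform bound in the affine form $\xi_1 + \xi_2\|\theta-\theta^*\|$ (your net argument already produces this if you track the $\|\theta-\theta^*\|$-dependence of the per-point concentration rather than taking a worst case over $\Theta'$), and fold $\xi_2$ into the contraction constant. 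The remaining ingredients of your outline are correct and coincide with the paper's proof.
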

The characterization of $c_1$ and $c_2$ can be found in Section \ref{sec:main_formal}. In Theorem \ref{thm:main_informal}, in addition to the non-specified ``technical assumptions'', we also impose assumptions on $\delta$, $\alpha$, $N/k$ and $d$.
Next we illustrate that these conditions can indeed hold simultaneously.

As can be seen later, $\delta$ can be viewed as the expected fraction of batches that are ``statistically bad";  the larger the batch sample size $N/k$ (comparing to $d$), the smaller $\delta$. 
Additionally, up to $q/k$ fraction of the batches may contain Byzantine machines. In total, we may expect $\delta+q/k$ fraction of the batches to be bad.
Theorem \ref{thm:main_informal} says that as long as the total fraction of bad batches is less than $1/2$, we are able to show with high probability, our Byzantine Gradient Descent Method converges exponentially fast.





\begin{remark}
\label{choice k}
In this remark, we discuss the choice of $k$.

When $q=0$,
$k$ can be chosen to be $1$ and
$\log(1/\delta)$ can be chosen to be $d$. Thus the geometric
median of means reduces to simple averaging.
Theorem \ref{thm:main_informal} implies that with probability at least $1-e^{-\Omega(d) }$,
the asymptotic estimation error rate is $\sqrt{d/N}$.

For $q \ge 1$, we can choose $k$ to be $2(1+\epsilon) q$ for an arbitrarily small but fixed
constant $\epsilon>0$ and $\alpha=\frac{2+\epsilon}{4+4\epsilon}$ and $\log(1/\delta)=d$.  In this way, $\alpha-q/k=\frac{\epsilon}{4+4\epsilon}$.
Using the property that $D(\delta'\| \delta) \ge \delta' \log \frac{\delta'}{e \delta}$,
we have that  $D\left( (\alpha -q/k) \| \delta \right)  \ge \Omega(d)$. Hence as long as
$N/k \ge c_1 d \log (N/k)$ for a sufficiently large
universal constant $c_1$, with probability at least $1-e^{-\Omega(qd)}$,
the estimation error of $\theta_t$  converges exponentially
fast to $c_2 \sqrt{dq/N}$ for a universal constant $c_2$.


Based on our analysis, the number of batches $k$ in our Byzantine gradient algorithm provides a trade-off
between the statistical estimation error and the Byzantine failures: With a larger $k$,
our algorithm can tolerate more Byzantine failures, but the estimation error gets larger. However, this trade-off may be an artifact of our proof and may not be fundamental.



\end{remark}

\begin{remark}\label{rmk:gm_approximate}
In this remark, we discuss the practical issues of computing geometric median.

Since exact geometric median may not be computed efficiently in practice, we can  use $(1+\gamma)$-approximate geometric median
in the robust gradient aggregation step \eqref{eq:robust_aggregation}. Moreover, in view of
\prettyref{lmm:geometric_median_robust}, comparing to the exact geometric median,
a $(1+\gamma)$-approximate geometric median induces
an additional deviation term proportional to
$\gamma$ and the maximum norm of the averaged batch gradients. Therefore, in \eqref{eq:robust_aggregation},
we also trim away averaged batch gradients of norm larger than a threshold $\tau$ before computing the
$(1+\gamma)$-approximate geometric median. Since the gradients are of dimension $d$, one can choose
the threshold $\tau=\Theta(d)$ so that with high probability the averaged gradients over Byzantine-free batches
will not be trimmed away. Finally by choosing $\gamma=1/N$, the additional deviation term is ensured
to be $O(d/N)$ and thus it will not affect the final estimation error which is on
the order of $\max\{\sqrt{dq/N},~ \sqrt{d/N}\}.$

\end{remark}
Our algorithm is both computation and communication efficient.
Under the choice of $k$ in Remark \ref{choice k}, the computation and communication cost of our proposed algorithm can be summarized as follows. For estimation error converging to $c_2 \sqrt{dq /N}$, $O(\log N)$ communication rounds are sufficient. In each round, every working machine transmits a $d$-dimensional vector to the parameter server. In terms of computation cost,
in each round, every working machine computes a gradient based on $N/m$ local data samples, which takes $O(Nd/m)$. The parameter server computes the geometric median of means of gradients.
The means of gradients over all batches can be computed in $O(md)$ steps. It is shown in~\cite{cohen2016geometric} that
a $(1+\gamma)$-approximate geometric median can be computed in $O(qd \log^3(1/\gamma) )$ and as we discussed in Remark~\ref{rmk:gm_approximate}, $\gamma=1/N$ suffices for our purpose.
Therefore, in each round, in total the parameter server needs to take $O(md + qd \log^3 (N) )$ steps.

\section{Convergence Results and Analysis}
\label{sec: analysis}
In this section, we present our main results and their proofs.

Recall that in Algorithm \ref{alg:BGD}, the machines are grouped into $k$ batches beforehand. For each batch of machines $ 1 \le \ell \le k$, we define a function $Z_\ell: \Theta \to \reals^d$ to be the {\em difference} between the average of the batch sample gradient functions and the population gradient, i.e.,  $\forall \, \theta\in \Theta$
\begin{align}
\label{batch}
Z_\ell(\theta) &\triangleq \frac{1}{b} \sum_{j= (\ell-1) b +1 }^{ \ell b} \nabla \bar{f}^{(j) } (\theta) - \nabla F(\theta)\nonumber\\
&= \frac{k}{N} \sum_{j= (\ell-1) b +1 }^{ \ell b}  \sum_{i \in \calS_j } \nabla f(X_i, \theta) - \nabla F(\theta),
\end{align}
where the last equality follows from \eqref{local obj} and the fact that batch size $b=m/k$ and local data size $|\calS_j |=N/m$.
Since each function $Z_\ell$ depends on the local data at $\ell$-batch $\{X_i: i \in \calS_j,  (\ell-1) b +1 \le j \le \ell b\}$
and  $X_i$'s are i.i.d., the functions $Z_\ell(\cdot)$'s are also
``independently and identically distributed''. For any given positive precision parameters $\xi_1$ and $\xi_2$ specified later, and $\alpha\in (0, 1/2)$,
define a good event
\begin{align}
\label{good event}
\calE_{\alpha, \xi_1, \xi_2} \triangleq \left\{ \sum_{\ell=1}^k \indc{ \forall \theta: \; C_\alpha  \|Z_\ell (\theta)\| \le \xi_2 \|\theta-\theta^\ast\| + \xi_1} \ge k(1-\alpha) + q \right\}.
\end{align}
Informally speaking, on event $\calE_{\alpha,\xi_1,\xi_2}$, in at least $k(1-\alpha) + q$ batches, the average of the batch sample gradient functions is uniformly close to the population gradient function.

We show our convergence results of Algorithm \ref{alg:BGD} in two steps. The first step is ``deterministic'', showing that our Byzantine
gradient descent algorithm converges exponentially fast on good event $\calE_{\alpha,\xi_1,\xi_2}$. The second
part is ``stochastic'', proving that this good event $\calE_{\alpha, \xi_1, \xi_2}$ happens with high probability.

\subsection{Convergence of Byzantine Gradient Descent on $\calE_{\alpha, \xi_1, \xi_2}$}
%

We consider a fixed execution.
Recall that $\calB_t$ denotes the set of Byzantine machines at iteration $t$ of the given execution,
which could change across iterations. Define a vector of functions $\vct{g}_t(\cdot)$ with respect to $\calB_t$ as:
$$\vct{g}_t(\theta) = (g^{(1)}_t (\theta), \ldots, g^{(m)}_t (\theta) ), ~ \forall \, \theta
$$
such that $ \forall \, \theta $,
$$
g^{(j)}_t (\theta) = \begin{cases}
\nabla \bar{f}^{(j)}(\theta) & \text{ if } j \notin \calB_t \\
\star &  \text{ o.w. },
\end{cases}
$$
where $\star$ is arbitrary\footnote{By ``arbitrary'' we mean that  $g^{(j)}_t (\cdot)$ cannot be specified.}. That is,  $g^{(j)}_t (\cdot)$ is the true gradient function
$\bar{f}^{(j)}(\cdot)$  if machine $j$ is not Byzantine at iteration $t$, and arbitrary otherwise. It is easy to see that the definition of $\vct{g}_t(\cdot)$ is consistent with the definition of $\vct{g}_t (\theta_{t-1})$ in \eqref{received gradients}.
Define $\tilde{Z}_\ell (\cdot)$ for each $\theta$ as
\begin{align}
\label{received batch}
\tilde{Z}_\ell (\theta) \triangleq  \frac{1}{b} \sum_{j=(\ell-1)b +1}^{\ell b } g^{(j)}_t  (\theta) - \nabla F(\theta).
\end{align}
By definition of $g^{(j)}_t (\cdot)$, for any $\ell$-th batch such that
$$\{ b(\ell-1) +1, \ldots, b \ell \} \cap \calB_t = \emptyset,$$  \ie, it does not contain
any Byzantine machine at iteration $t$, it holds that
$\tilde{Z}_\ell (\theta) = Z_\ell (\theta)$ for  all  $\theta$, where $ Z_\ell (\cdot)$ is defined in \eqref{batch}.
%
%
%
%
\begin{lemma}\label{lmm:aggregated_gradient}
On event $\calE_{\alpha, \xi_1, \xi_2}$,  for every iteration $ t \ge 1$, we have
$$
  \left \| \calA_k \left( \vct{g}_t (\theta) \right)  - \nabla F(\theta) \right\|
\le  \xi_2 \|\theta-\theta^\ast\| + \xi_1, \quad  \forall \theta \in \Theta.
$$
\end{lemma}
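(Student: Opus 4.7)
The plan is to combine the translation equivariance of the geometric median with the robustness result in \prettyref{lmm:geometric_median_robust}, after first accounting for the at most $q$ batches contaminated by Byzantine machines in round $t$.

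First I would rewrite the aggregated gradient in terms of the $\tilde Z_\ell$'s defined in \eqref{received batch}. By definition of $\calA_k$,
\begin{align*}
\calA_k(\vct g_t(\theta)) = \median\left\{ \tfrac{1}{b}\sum_{j=(\ell-1)b+1}^{\ell b} g_t^{(j)}(\theta) : 1 \le \ell \le k \right\} = \median\left\{ \tilde Z_\ell(\theta) + \nabla F(\theta) : 1 \le \ell \le k\right\}.
\end{align*}
The geometric median is translation-equivariant (shifting every input by a common vector $c$ shifts the minimizer of $y\mapsto\sum\|y-y_i\|$ by $c$), so
\begin{align*}
\calA_k(\vct g_t(\theta)) - \nabla F(\theta) = \median\{\tilde Z_\ell(\theta): 1\le \ell\le k\}.
\end{align*}

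The next step is to count how many $\tilde Z_\ell(\theta)$ have small norm. Let $r \triangleq (\xi_2\|\theta-\theta^*\|+\xi_1)/C_\alpha$. On $\calE_{\alpha,\xi_1,\xi_2}$, there are at least $k(1-\alpha)+q$ batches $\ell$ for which $\|Z_\ell(\theta)\|\le r$ holds \emph{uniformly} for every $\theta\in\Theta$. Since $|\calB_t|\le q$, at most $q$ batches contain a Byzantine machine; for every other batch we have $\tilde Z_\ell(\theta)=Z_\ell(\theta)$ pointwise. Removing the at most $q$ contaminated batches leaves at least $k(1-\alpha)$ batches in which $\|\tilde Z_\ell(\theta)\|\le r$ for the given $\theta$.

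With this count in hand I would simply invoke \prettyref{lmm:geometric_median_robust} with $n=k$, $z_\ell=\tilde Z_\ell(\theta)$, $\gamma=0$ (exact geometric median), and the value of $r$ chosen above. The hypothesis $\sum_\ell \indc{\|z_\ell\|\le r}\ge (1-\alpha)k$ is exactly what we just verified, so the lemma yields
\begin{align*}
\|\median\{\tilde Z_\ell(\theta)\}\| \le C_\alpha\, r = \xi_2\|\theta-\theta^*\|+\xi_1,
\end{align*}
which after substitution into the identity of the first paragraph gives the claimed bound. The argument is essentially bookkeeping; the only nontrivial point to check carefully is that the uniform-in-$\theta$ nature of the good-batch condition in \eqref{good event} allows a single $\calE_{\alpha,\xi_1,\xi_2}$ to imply the bound simultaneously for every $\theta$ and every iteration $t$, even though the Byzantine set $\calB_t$ may differ from round to round; this works because the $q$ batches we discard are allowed to depend on $t$, while the $k(1-\alpha)+q$ good batches are fixed by the event and are uniform over $\Theta$.
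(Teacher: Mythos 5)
Your proposal is correct and follows essentially the same route as the paper's own proof: translation invariance of the geometric median to reduce to $\median\{\tilde Z_\ell(\theta)\}$, the count that at least $k(1-\alpha)+q$ uniformly good batches minus at most $q$ Byzantine-contaminated ones leaves $(1-\alpha)k$ batches with $\|\tilde Z_\ell(\theta)\|\le r$, and an application of Lemma~\ref{lmm:geometric_median_robust} with $\gamma=0$. Your closing remark on why the uniform-in-$\theta$ event handles the round-dependent Byzantine sets is a point the paper leaves implicit, and it is handled correctly.
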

\begin{proof}
By definition of $\calA_k$ in \eqref{eq:robust_aggregation},
for any fixed $\theta$,
$$
\calA_k (  \vct{g}_t (\theta)  ) =   \median \left\{  \frac{1}{b} \sum_{j=1}^{b}  g^{(j)}_t (\theta) , \; \cdots, \;
\frac{1}{b} \sum_{j=m-b+1}^{m} g^{(j)}_t  (\theta) \right\}
$$
Since geometric median is invariant with translation, it follows that
$$
\calA_k (  \vct{g}_t (\theta)  ) - \nabla F(\theta) =\median \left\{ \tilde{Z}_1 (\theta),   \; \cdots, \;
\tilde{Z}_m (\theta) \right\} .
$$
On event $\calE_{\alpha, \xi_1, \xi_2}$, at least $k(1-\alpha)+q$ of the $k$ batches $\{Z_\ell: 1 \le \ell \le k\}$
satisfy $C_\alpha \|Z_\ell (\theta)\|  \le \xi_2 \| \theta-\theta^* \| + \xi_1$ uniformly. Moreover,
for Byzantine-free batch $\ell$, it holds that $\tilde{Z}_\ell (\cdot) = Z_\ell (\cdot)$.
Hence, at least $k(1-\alpha)$ of the $k$ {\em received} batches $\{ \tilde{Z}_\ell: 1 \le \ell \le k\}$ satisfy
$C_\alpha \|\tilde{Z}_\ell (\theta)\| \le \xi_2 \| \theta-\theta^* \| + \xi_1$ uniformly.
The conclusion readily  follows from Lemma \ref{lmm:geometric_median_robust} with $\gamma=0$.



\end{proof}

\subsubsection{\bf Convergence of Approximate Gradient Descent}

Next, we show a convergence result of an approximate gradient descent,
which might be of independent interest.
For any $\theta \in \Theta$, define a new $\theta'$ as
\begin{align}
\theta'= \theta -\eta \times \nabla F(\theta). \label{eq:pop_gd}
\end{align}
We remark that the above update is one step of population gradient descent
given in \eqref{eq:pop_gd o}.

\begin{lemma}\label{lmm:convg_pop_gd}
Suppose Assumption \ref{ass:pop_risk_smooth} holds. If we choose
the step size $\eta=L/(2M^2)$, then  ${\theta'}$ defined in \eqref{eq:pop_gd} satisfies that
\begin{align}
\| \theta' - \theta^* \| \le  \sqrt{1- L^2/(4M^2)} \; \| \theta - \theta^*\|. \label{eq:pop_convergence}
\end{align}
\end{lemma}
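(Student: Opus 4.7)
The plan is to expand the squared Euclidean distance $\|\theta'-\theta^*\|^2$ and bound the cross term and the gradient norm using the two halves of Assumption~\ref{ass:pop_risk_smooth}. Writing
\[
\|\theta' - \theta^*\|^2 \;=\; \|\theta - \theta^*\|^2 \;-\; 2\eta\,\iprod{\nabla F(\theta)}{\theta - \theta^*} \;+\; \eta^2 \|\nabla F(\theta)\|^2,
\]
it suffices to lower bound the middle inner product by a multiple of $\|\theta-\theta^*\|^2$ and upper bound the gradient norm by another multiple of $\|\theta-\theta^*\|$. Since the update in \eqref{eq:pop_gd} is unconstrained, I will proceed under the natural interpretation that $\theta^*$ is a stationary point with $\nabla F(\theta^*) = 0$ (or, equivalently, use the first-order optimality condition $\iprod{\nabla F(\theta^*)}{\theta-\theta^*}\ge 0$ on $\Theta$ to absorb the correction term).

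For the cross term, I would apply the two $L$-strong convexity inequalities from Assumption~\ref{ass:pop_risk_smooth}, one at the pair $(\theta,\theta^*)$ and the other at $(\theta^*,\theta)$, and add them to obtain the standard coercivity bound
\[
\iprod{\nabla F(\theta) - \nabla F(\theta^*)}{\theta - \theta^*} \;\ge\; L\,\|\theta-\theta^*\|^2.
\]
Combined with $\nabla F(\theta^*) = 0$, this gives $\iprod{\nabla F(\theta)}{\theta-\theta^*} \ge L\|\theta-\theta^*\|^2$. For the gradient norm, the $M$-Lipschitz property applied to the pair $(\theta,\theta^*)$ yields $\|\nabla F(\theta)\| = \|\nabla F(\theta) - \nabla F(\theta^*)\| \le M\|\theta-\theta^*\|$. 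Plugging both estimates back into the expansion gives
\[
\|\theta' - \theta^*\|^2 \;\le\; \pth{1 - 2\eta L + \eta^2 M^2}\,\|\theta - \theta^*\|^2.
\]

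Finally, I will substitute $\eta = L/(2M^2)$ into the quadratic in $\eta$, which simplifies to $1 - L^2/M^2 + L^2/(4M^2) = 1 - 3L^2/(4M^2) \le 1 - L^2/(4M^2)$; taking square roots delivers exactly \eqref{eq:pop_convergence}. I do not expect any real obstacle here, since this is essentially a textbook contraction estimate for gradient descent on strongly convex, smooth objectives; in fact, the stated constant is slightly looser than what the direct argument produces, which is convenient because it leaves slack for the later analysis where the true gradient $\nabla F$ is replaced by the robustly aggregated gradient $\calA_k(\vct{g}_t(\cdot))$ via Lemma~\ref{lmm:aggregated_gradient}. The only minor subtlety worth noting in the write-up is the justification of $\nabla F(\theta^*)=0$; as discussed above, this is immediate if we treat the gradient step as unconstrained, and otherwise can be circumvented by using the first-order optimality condition on $\Theta$ directly.
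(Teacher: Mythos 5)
Your proof is correct and follows essentially the same route as the paper's: expand $\|\theta'-\theta^*\|^2$, use $\nabla F(\theta^*)=0$, bound the cross term by strong-convexity coercivity and the gradient norm by $M$-Lipschitzness, then substitute $\eta=L/(2M^2)$. The only (harmless) difference is that you add the two strong-convexity inequalities to get coercivity constant $L$ and hence the sharper factor $1-3L^2/(4M^2)$, whereas the paper uses strong convexity in one direction and plain first-order convexity in the other, obtaining coercivity constant $L/2$ and the stated factor $1-L^2/(4M^2)$ exactly.
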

The proof of Lemma \ref{lmm:convg_pop_gd} is rather standard, and is presented in Section \ref{pf app convergence} for completeness.
Suppose
that for each $t \ge 1$, we have access to gradient function $G_t(\cdot)$, which
satisfy the uniform deviation bound:
\begin{equation}
\left\Vert G_t(\theta)-\nabla F(\theta)\right\Vert \le \xi_1 +  \xi_2 \| \theta-\theta^*\|,
\quad\forall\theta,  \label{eq:deviation}
\end{equation}
for two positive precision parameters $\xi_1, \xi_2$ that are {\em independent} of $t$.
Then we perform the following approximate gradient descent as a surrogate for population gradient descent:
\begin{align}
\theta_{t}=\theta_{t-1}-\eta \times G_t(\theta_{t-1}). \label{eq:noisy_gd}
\end{align}
The following lemma establishes the convergence of the approximate gradient descent.
\begin{lemma}\label{lmm:noisy_grad}
Suppose Assumption \ref{ass:pop_risk_smooth} holds, and choose $\eta=L/(2M^2)$.
If \eqref{eq:deviation} holds for each $t\ge1$ and
$$\rho \triangleq 1-\sqrt{1-L^2/(4M^2)} - \xi_2L/(2M^2) >0,$$
then the iterates $\{\theta_{t}\}$ in \eqref{eq:noisy_gd} satisfy
\[
\left\Vert \theta_{t}-\theta^{*}\right\Vert\le \left( 1-\rho  \right)^{t}\left\Vert \theta_{0}-\theta^{*}\right\Vert + \eta \xi_1/\rho.
\]
\end{lemma}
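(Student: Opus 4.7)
The plan is to view the approximate gradient descent update as a perturbation of the population gradient descent step, and then apply Lemma \ref{lmm:convg_pop_gd} together with the uniform deviation bound \eqref{eq:deviation} to obtain a one-step contraction. Finally, I would iterate the contraction to get the claimed bound.

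Concretely, I would first decompose the iterate as
\[
\theta_t = \theta_{t-1} - \eta \nabla F(\theta_{t-1}) + \eta \bigl(\nabla F(\theta_{t-1}) - G_t(\theta_{t-1})\bigr).
\]
Writing $\theta'_{t-1} \triangleq \theta_{t-1} - \eta \nabla F(\theta_{t-1})$, this is exactly the population gradient descent step considered in \eqref{eq:pop_gd}. By the triangle inequality,
\[
\|\theta_t - \theta^*\| \le \|\theta'_{t-1} - \theta^*\| + \eta \|G_t(\theta_{t-1}) - \nabla F(\theta_{t-1})\|.
\]
The first term is bounded by $\sqrt{1 - L^2/(4M^2)}\,\|\theta_{t-1} - \theta^*\|$ thanks to Lemma \ref{lmm:convg_pop_gd} (which applies because we chose $\eta = L/(2M^2)$). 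The second term is controlled by the assumed uniform deviation bound \eqref{eq:deviation}, giving $\eta(\xi_1 + \xi_2 \|\theta_{t-1} - \theta^*\|)$. Combining these yields the one-step recursion
\[
\|\theta_t - \theta^*\| \le \bigl(\sqrt{1 - L^2/(4M^2)} + \eta \xi_2\bigr)\|\theta_{t-1} - \theta^*\| + \eta \xi_1 = (1-\rho)\|\theta_{t-1} - \theta^*\| + \eta \xi_1,
\]
using the definition of $\rho$.

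The contractive step requires $1 - \rho \in [0,1)$, which is precisely the assumption $\rho > 0$ (and $\rho \le 1$ follows automatically from $\sqrt{1 - L^2/(4M^2)} \ge 0$ and $\xi_2 \eta \ge 0$). Unrolling the recursion $t$ times and summing the geometric series,
\[
\|\theta_t - \theta^*\| \le (1-\rho)^t \|\theta_0 - \theta^*\| + \eta \xi_1 \sum_{i=0}^{t-1} (1-\rho)^i \le (1-\rho)^t \|\theta_0 - \theta^*\| + \frac{\eta \xi_1}{\rho},
\]
which is the desired bound. There is no substantive obstacle here: the only subtlety is verifying that Lemma \ref{lmm:convg_pop_gd} can be invoked at each iterate $\theta_{t-1}$ (it can, since it is a deterministic statement holding for every $\theta \in \Theta$) and that the deviation bound \eqref{eq:deviation} is assumed to hold uniformly over $\theta$ with constants independent of $t$, so the same $(\xi_1,\xi_2)$ can be used at every step.
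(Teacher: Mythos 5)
Your proposal is correct and follows essentially the same route as the paper's own proof: the identical decomposition of $\theta_t$ into a population gradient step plus a perturbation, the same application of Lemma \ref{lmm:convg_pop_gd} and the deviation bound \eqref{eq:deviation} to get the one-step contraction $(1-\rho)\|\theta_{t-1}-\theta^*\|+\eta\xi_1$, and the same telescoping of the geometric series. No gaps.
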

\begin{remark}
As it can be seen later, the precision parameter $\xi_2$ can be chosen to be a function of $N/k$ such that $\xi_2\to 0$ as $N/k\diverge$. Thus, there exists $\xi_2$ for $\rho$ defined in Lemma \ref{lmm:noisy_grad} to be positive.
\end{remark}

\begin{proof}[Proof of Lemma \ref{lmm:noisy_grad}]
Fix any $t\ge 1$, we have 
\begin{align*}
 \left\Vert \theta_{t}-\theta^{*}\right\Vert
= & \left\Vert \theta_{t-1}-\eta G_t(\theta_{t-1})-\theta^{*}\right\Vert \\
= & \left\Vert \theta_{t-1}-\eta\nabla F(\theta_{t-1})-\theta^{*}+\eta\left(\nabla F(\theta_{t-1})-G_t(\theta_{t-1})\right)\right\Vert\\
\le &  \left\Vert \theta_{t-1}-\eta\nabla F(\theta_{t-1})-\theta^{*}\right\Vert + \eta \left\Vert \nabla F(\theta_{t-1})-G_t(\theta_{t-1})\right\Vert.
\end{align*}
It follows from Lemma \ref{lmm:convg_pop_gd} that
$$
\left\Vert \theta_{t-1}-\eta\nabla F(\theta_{t-1})-\theta^{*}\right\Vert \le \sqrt{1- L^2/(4M^2)} \left\Vert \theta_{t-1}-\theta^{*}\right\Vert
$$
and from \eqref{eq:deviation} that
$$
\left\Vert \nabla F(\theta_{t-1})-G_t(\theta_{t-1})\right\Vert \le \xi_1 + \xi_2 \| \theta_{t-1} - \theta^* \| .
$$
Hence, $$
\left\Vert \theta_{t}-\theta^{*}\right\Vert
\le \left(\sqrt{1- L^2/(4M^2)} + \eta \xi_2 \right)  \| \theta_{t-1} - \theta^* \|_2 + \eta \xi_1 .
$$
A standard telescoping argument then yields that
\begin{align*}
\left\Vert \theta_{t}-\theta^{*}\right\Vert  &\le  \left( 1-\rho \right)^{t}\left\Vert \theta_{0}-\theta^{*}\right\Vert+ \eta \xi_1
 \sum_{\tau=0}^{t-1} \left( 1-\rho \right)^{\tau}\\
&\le\left( 1-\rho  \right)^{t}\left\Vert \theta_{0}-\theta^{*}\right\Vert +
 \eta \xi_1/\rho,
 \end{align*}
where $ \rho =  1-\sqrt{1-L^2/(4M^2)} - \xi_2L/(2M^2)$ and $\eta= L/(2M^2)$.
\end{proof}

\subsubsection{\bf Convergence of Byzantine Gradient Descent  on Good Event $\calE_{\alpha, \xi_1, \xi_2}$}

With Lemma \ref{lmm:aggregated_gradient} and the convergence of the approximate gradient descent (Lemma \ref{lmm:noisy_grad}), we show that Algorithm \ref{alg:BGD} converges exponentially fast on good event $\calE_{\alpha, \xi_1, \xi_2}$.
\begin{theorem}\label{thm:convg_BGD}
Suppose event $\calE_{\alpha, \xi_1, \xi_2}$ holds and iterates $\{\theta_t\}$ are given by Algorithm \ref{alg:BGD}
with $\eta=L/(2M^2)$. If $\rho= 1-\sqrt{1-L^2/(4M^2)} - \xi_2 L/(2M^2)>0$ as defined in Lemma \ref{lmm:noisy_grad}, then
\begin{align}
\| \theta_t - \theta^* \| \le (1-\rho)^t \| \theta_0 - \theta^* \| + \eta \xi_1/\rho.
\end{align}
\end{theorem}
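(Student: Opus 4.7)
The plan is to recognize that Theorem~\ref{thm:convg_BGD} is essentially an immediate corollary of the two preceding lemmas: the deterministic deviation bound on the aggregated gradient (Lemma~\ref{lmm:aggregated_gradient}) and the abstract convergence guarantee for approximate gradient descent (Lemma~\ref{lmm:noisy_grad}). The proof reduces to checking that the Algorithm~\ref{alg:BGD} update fits the template of \eqref{eq:noisy_gd} with a noisy gradient function whose uniform error bound matches \eqref{eq:deviation}.

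Concretely, I would set, for each iteration $t\ge 1$, the surrogate gradient function $G_t(\cdot) \triangleq \calA_k(\vct{g}_t(\cdot))$, where $\vct{g}_t(\cdot)$ is the vector of reported gradient functions associated with the Byzantine set $\calB_t$ active at round $t$. With this identification, the update rule in line~5 of Algorithm~\ref{alg:BGD}, namely $\theta_t = \theta_{t-1} - \eta\,\calA_k(\vct{g}_t(\theta_{t-1}))$, becomes exactly the approximate gradient descent recursion $\theta_t = \theta_{t-1} - \eta\, G_t(\theta_{t-1})$ analyzed in Lemma~\ref{lmm:noisy_grad}.

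Next, I would invoke Lemma~\ref{lmm:aggregated_gradient}: on event $\calE_{\alpha,\xi_1,\xi_2}$, for every $t\ge 1$ and every $\theta\in\Theta$,
\begin{equation*}
\left\| G_t(\theta) - \nabla F(\theta)\right\| = \left\| \calA_k(\vct{g}_t(\theta)) - \nabla F(\theta)\right\| \le \xi_1 + \xi_2\,\|\theta-\theta^*\|.
\end{equation*}
This is precisely the uniform deviation hypothesis \eqref{eq:deviation} required by Lemma~\ref{lmm:noisy_grad}, with the same constants $\xi_1,\xi_2$ across all $t$. Since $\rho>0$ is assumed and $\eta=L/(2M^2)$ is our chosen step size, Lemma~\ref{lmm:noisy_grad} directly yields
\begin{equation*}
\|\theta_t-\theta^*\| \le (1-\rho)^t\|\theta_0-\theta^*\| + \eta\xi_1/\rho,
\end{equation*}
which is the claim.

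The one conceptual point worth flagging—which is exactly why the two-layer decomposition was set up—is that the iterate $\theta_{t-1}$ is a random quantity that depends in an unspecified way on the past Byzantine behavior, and the Byzantine set $\calB_t$ can itself depend on $\theta_{t-1}$. A naive bound of the form ``$G_t(\theta_{t-1})$ is close to $\nabla F(\theta_{t-1})$ in expectation/with high probability'' would be dangerous because of this adversarial coupling. The crucial feature that sidesteps the difficulty is that Lemma~\ref{lmm:aggregated_gradient} gives a bound that is \emph{uniform} over $\theta\in\Theta$ on the event $\calE_{\alpha,\xi_1,\xi_2}$; once that event is fixed, the bound holds simultaneously at every possible iterate, so no probabilistic control of the Byzantine-dependent trajectory is needed in this deterministic step. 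All remaining randomness and adversarial dependency is absorbed into proving that $\calE_{\alpha,\xi_1,\xi_2}$ occurs with high probability, which is the ``stochastic'' part of the argument to be handled separately. Hence the hard work has already been done, and this theorem is just the clean composition of the two lemmas.
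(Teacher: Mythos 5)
Your proof is correct and matches the paper's own argument exactly: the paper also sets $G_t(\cdot) = \calA_k(\vct{g}_t(\cdot))$, invokes Lemma~\ref{lmm:aggregated_gradient} for the uniform deviation bound on event $\calE_{\alpha,\xi_1,\xi_2}$, and then applies Lemma~\ref{lmm:noisy_grad}. Your added remark about why the uniformity over $\theta$ is what neutralizes the adversarial coupling between $\calB_t$ and $\theta_{t-1}$ is a correct and worthwhile observation that the paper makes only informally elsewhere.
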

\begin{proof}
In Algorithm \ref{alg:BGD}, at iteration $t$, the parameter server updates the model parameter $\theta_{t-1}$ using the approximate gradient $\calA_k \left( \vct{g}_t (\theta_{t-1})\right)$ -- the value of the approximate gradient function $\calA_k \left( \vct{g}_t (\cdot) \right)$ evaluated at $\theta_{t-1}$. From Lemma \ref{lmm:aggregated_gradient}, we know that on event $\calE_{\alpha, \xi_1, \xi_2}$
$$
  \left \| \calA_k \left( \vct{g}_t (\theta) \right)  - \nabla F(\theta) \right\|
\le  \xi_2 \|\theta-\theta^\ast\| + \xi_1, \quad  \forall \theta \in \Theta.
$$
The conclusion then follows from
Lemma \ref{lmm:noisy_grad} by setting $G_t (\theta)$ to be $ \calA_k \left( \vct{g}_t (\theta) \right)$.
\end{proof}

\subsection{Bound Probability of Good Event $\calE_{\alpha, \xi_1, \xi_2}$}
Recall that for each batch $\ell$ for $1\le \ell \le k$, $Z_{\ell}$ is defined in \eqref{batch} w.r.t.\ the data samples collectively kept by the machines in this batch. Thus, function $Z_{\ell}$ is random.
The following lemma gives a lower bound to the probability of good event $\calE_{\alpha, \xi_1, \xi_2}$.
%
%
\begin{lemma}\label{lmm:geometric_median}
Suppose for all $1 \le \ell \le k$, $Z_{\ell}$ satisfies   
\begin{align}
\prob{ \forall \theta: C_\alpha \|Z_{\ell} (\theta)  \|  \le \xi_2 \| \theta-\theta^\ast\|  + \xi_1} \ge 1- \delta \label{eq:assump_geometric}
\end{align}
for any $\alpha \in (q/k,1/2)$ and $0<\delta \le \alpha - q/ k$.
Then
\begin{align}
\prob{ \calE_{\alpha, \xi_1, \xi_2}} \ge 1-   e^{ - k D( \alpha - q/ k  \| \delta ) }. \label{eq:conclusion_geometric}
\end{align}
\end{lemma}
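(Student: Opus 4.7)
\medskip

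\noindent\textbf{Proof plan for Lemma \ref{lmm:geometric_median}.}
The plan is to reduce the statement to a one-sided Chernoff bound for a sum of i.i.d.\ Bernoulli indicators. For each batch $\ell \in \{1,\dots,k\}$, define
\[
W_\ell \;\triangleq\; \indc{\exists\, \theta \in \Theta:\; C_\alpha \|Z_\ell(\theta)\| > \xi_2 \|\theta-\theta^*\| + \xi_1},
\]
the indicator that the uniform deviation bound in \eqref{eq:assump_geometric} fails for batch $\ell$. By construction, $\calE_{\alpha,\xi_1,\xi_2}$ is precisely the event $\{\sum_{\ell=1}^k (1-W_\ell) \ge k(1-\alpha)+q\}$, i.e., $\{\sum_\ell W_\ell \le k\alpha - q\}$. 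So the task is to upper bound $\Prob\{\sum_\ell W_\ell > k\alpha - q\}$.

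First I would argue independence: by construction, $Z_\ell$ depends only on the data samples $\{X_i : i \in \calS_j,\ (\ell-1)b+1 \le j \le \ell b\}$, and these sample sets are disjoint across $\ell$. Since the $X_i$'s are i.i.d., the random functions $Z_1,\dots,Z_k$ are independent (in fact, identically distributed). Consequently, $W_1,\dots,W_k$ are i.i.d.\ Bernoulli random variables. By assumption \eqref{eq:assump_geometric}, each has success probability $p \triangleq \Prob\{W_\ell = 1\} \le \delta$, so the sum $S \triangleq \sum_{\ell=1}^k W_\ell$ is stochastically dominated by a $\Binom(k,\delta)$ random variable.

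Second, I would apply the standard Chernoff/Hoeffding tail bound for the binomial distribution in its Kullback--Leibler form: for any $t$ with $\delta \le t \le 1$,
\[
\Prob\{\Binom(k,\delta) \ge kt\} \;\le\; \exp\!\pth{-k\, D(t \,\|\, \delta)},
\]
where $D(\cdot\|\cdot)$ is the binary divergence defined in the statement of Theorem~\ref{thm:main_informal}. Setting $t = \alpha - q/k$ (which satisfies $t \ge \delta$ by hypothesis and $t < 1/2 < 1$ since $\alpha < 1/2$) and using stochastic domination gives
\[
\Prob\{S > k\alpha - q\} \;\le\; \Prob\{\Binom(k,\delta) \ge k(\alpha - q/k)\} \;\le\; \exp\!\pth{-k\, D(\alpha - q/k \,\|\, \delta)},
\]
which is exactly the complementary bound to \eqref{eq:conclusion_geometric}.

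The only subtle point is justifying that the $W_\ell$ are measurable indicators even though they involve a supremum over the (possibly uncountable) parameter set $\Theta$; in the sequel this is handled implicitly through standard regularity assumptions on $f(x,\cdot)$ (e.g., continuity), and no further issue arises. The rest is a routine application of independence plus the binomial Chernoff bound, so I do not expect any real obstacle beyond carefully matching constants between $k(1-\alpha)+q$ good batches and $k\alpha - q$ bad batches.
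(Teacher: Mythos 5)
Your proposal is correct and follows essentially the same route as the paper: the paper's proof likewise reduces the event to a count of batches satisfying the uniform bound, invokes first-order stochastic domination by a $\Binom(k,1-\delta)$ variable (equivalently, your domination of the bad-batch count by $\Binom(k,\delta)$), and applies the Chernoff bound in binary-divergence form. Your explicit justification of independence of the $Z_\ell$'s via disjoint sample sets is a point the paper establishes earlier in the text rather than inside the proof, but the argument is the same.
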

\begin{proof}
Let $T \sim \Binom(k,1-\delta)$.
By assumption \eqref{eq:assump_geometric},
$$
 \sum_{\ell=1}^{k} \indc{ \forall \theta: C_\alpha \|Z_{\ell} (\theta)  \|_2 \le \xi_2 \| \theta-\theta^\ast\|  + \xi_1}$$  first-order stochastically dominates $T$, i.e.,
\begin{align}
\label{batch grouping 1}
&\prob{   \sum_{\ell=1}^{k} \indc{ \forall \theta:  C_\alpha\|Z_{\ell} (\theta)  \|_2 \le \xi_2 \| \theta-\theta^\ast\|  + \xi_1}  \ge  k (1-\alpha) +q } \ge  \prob{ T \ge  k (1-\alpha) +q }.
\end{align}
By Chernoff's bound for binomial distributions, the following holds:
\begin{align}
\label{batch grouping 2}
\prob{ T \ge  k (1-\alpha) +q } \ge 1-  e^{ -k D(  \alpha - q/k \| \delta ) }.
\end{align}
Combining \eqref{batch grouping 1} and \eqref{batch grouping 2} together, we conclude \eqref{eq:conclusion_geometric}.

\end{proof}

It remains to show the uniform convergence of $Z_\ell$ as required by \eqref{eq:assump_geometric}. To this end, we need to impose a few technical assumptions that are rather standard \cite{vershynin2010nonasym}.
Recall that gradient $\nabla f(X, \theta)$ is random as the input $X$ is random.
We assume gradient $\nabla f(X, \theta^*)$ is sub-exponential. The definition and some related concentration properties of sub-exponential random variables are presented in Section \ref{subexpon prelim} for completeness.
\begin{assumption}\label{ass:gradient_sub_gaussian}
There exist positive constants $\sigma_1$ and $\alpha_1$ such that for any
unit vector $v \in B$, $\Iprod{\nabla f(X, \theta^*) }{v}$
is sub-exponential with scaling parameters $\sigma_1$ and $\alpha_1$, \ie,
$$
\sup_{v \in B}  \expect{ \exp \left( \lambda \Iprod{ \nabla f(X, \theta^*) }{v} \right) } \le e^{\sigma_1^2 \lambda^2/2},
\quad \forall |\lambda| \le \frac{1}{\alpha_1},
$$
where $B$ denotes the unit sphere $\{\theta: \| \theta\|_2=1\}$.
\end{assumption}

Intuitively speaking, Assumption \ref{ass:gradient_sub_gaussian} is placed to ensure that, with high probability, using the {\em true} sample gradient for individual batches, we are able to ``identify" the optimal model $\theta^*$. That is, $(1/n)\sum_{i=1}^n \nabla f(X_i, \theta^*)$ concentrates around its mean $\nabla F(\theta^*) =0$.
%
\begin{lemma}\label{lmm:gradient_sub_gaussian}
Suppose Assumption \ref{ass:gradient_sub_gaussian} holds.
For any $\delta \in (0,1)$ and any positive integer $n$, let 
\begin{align}
\Delta_1 (n, d, \delta, \sigma_1)=\sqrt{2} \sigma_1 \sqrt{ \frac{d \log 6 + \log(3/\delta) }{ n}  }.\label{eq:s}
\end{align}
If  $\Delta_1(n, d, \delta, \sigma_1) \le \sigma_1^2/\alpha_1$, then
$$
 \prob{ \left \|  \frac{1}{n} \sum_{i=1}^n  \nabla f(X_i, \theta^* )  - \nabla F(\theta^* ) \right \|
 \ge 2\Delta_1(n, d, \delta, \sigma_1)} \le \frac{\delta}{3}.
$$
\end{lemma}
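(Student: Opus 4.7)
The plan is to prove this via a standard covering-argument-plus-Bernstein-inequality combination. The high-level idea is that we want to bound the norm of a random vector, which we rewrite as a supremum of inner products over the unit sphere; a finite $\epsilon$-net reduces the supremum to a maximum over exponentially many points, each of which is a sum of i.i.d.\ sub-exponential random variables to which we can apply a Bernstein-type tail bound, and a final union bound gives the result.

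Concretely, first I would observe that the MGF assumption in Assumption \ref{ass:gradient_sub_gaussian} forces $\Expect[\langle \nabla f(X,\theta^*), v\rangle] = 0$ for every unit $v$ (differentiate the MGF bound at $\lambda=0$), hence $\nabla F(\theta^*) = 0$. So the quantity of interest is $\|(1/n)\sum_{i=1}^n \nabla f(X_i,\theta^*)\|$, a sum of i.i.d.\ mean-zero random vectors whose one-dimensional marginals are sub-exponential with parameters $(\sigma_1,\alpha_1)$. Next, take a $1/2$-net $\calN$ of the unit sphere $B$ in $\reals^d$ of cardinality $|\calN|\le 6^d$ (standard volumetric bound); by the well-known net-covering identity, $\|v\| \le 2 \max_{u\in\calN} \iprod{v}{u}$ for any $v$. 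For each fixed $u\in \calN$, the random variables $\iprod{\nabla f(X_i,\theta^*)}{u}$ are i.i.d., mean-zero, and sub-exponential with parameters $(\sigma_1,\alpha_1)$, so the standard Bernstein/Bernstein-for-sub-exponential inequality yields, in the ``sub-Gaussian regime'' $t\le \sigma_1^2/\alpha_1$,
\[
\prob{\frac{1}{n}\sum_{i=1}^n \iprod{\nabla f(X_i,\theta^*)}{u} \ge t} \le \exp\pth{-\frac{nt^2}{2\sigma_1^2}}.
\]
Then I would union-bound this over $u\in\calN$ and plug in $t=\Delta_1(n,d,\delta,\sigma_1)$. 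The assumption $\Delta_1\le \sigma_1^2/\alpha_1$ ensures we are in the valid regime for Bernstein. The choice of $\Delta_1$ is calibrated exactly so that $|\calN|\cdot e^{-n\Delta_1^2/(2\sigma_1^2)} \le 6^d \cdot e^{-(d\log 6 + \log(3/\delta))} = \delta/3$. Combining with the net-to-sphere factor of $2$ gives $\|(1/n)\sum_i \nabla f(X_i,\theta^*) - \nabla F(\theta^*)\| \le 2\Delta_1$ with probability at least $1-\delta/3$.

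The main obstacle, such as it is, is bookkeeping: the lemma states a clean bound with specific numerical constants ($\sqrt{2}$, $\log 6$, $\delta/3$, and the factor of $2$ in front of $\Delta_1$) and I would need to be careful that the one-sided Bernstein tail (without the factor of $2$), the $6^d$-size net, and the $1/2$-net-to-sphere constant of $2$ all line up to produce exactly these constants. No new idea is required beyond this calibration. A minor additional care is needed to justify reducing the full sphere supremum to the net supremum in one direction rather than two (which is why the one-sided form of Bernstein suffices and we do not lose an extra factor of $2$ in the failure probability).
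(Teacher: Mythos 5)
Your proposal is correct and follows essentially the same route as the paper's proof: a $1/2$-net of the unit sphere of size at most $6^d$, the factor-of-$2$ net-to-sphere reduction, the one-sided sub-exponential (Bernstein) tail in the regime $\Delta_1 \le \sigma_1^2/\alpha_1$, and a union bound calibrated so that $6^d e^{-n\Delta_1^2/(2\sigma_1^2)} = \delta/3$. The only cosmetic difference is that you derive $\nabla F(\theta^*)=0$ from the moment-generating-function bound, whereas the paper simply uses that $\theta^*$ minimizes $F$; both are fine.
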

\begin{remark}
By definition of $\Delta_1 (n, d, \delta, \sigma_1)$,
$\Delta_1 (n, d, \delta, \sigma_1)$ is a non-increasing function of $n$. In particular,
 for fixed $\delta$ and $\sigma_1$, if $d=o(n)$,
\begin{align*}
\Delta_1 (n, d, \delta, \sigma_1)=\sqrt{2} \sigma_1 \sqrt{ \frac{d \log 6 + \log(3/\delta) }{ n}  } \to 0 ~~~~~ \text{as} ~~ n \diverge.
\end{align*}
Thus, if in addition $\alpha_1$ is assumed to be fixed, then for sufficiently large $n$,
$\Delta_1(n, d, \delta, \sigma_1) \le \sigma_1^2/\alpha_1$ holds.
\end{remark}
With a little abuse of notation, we write $\Delta_1 (n, d, \delta, \sigma_1)$ as $\Delta_1$ or $\Delta_1 (n)$ for short when its meaning is clear from the context.
Also, we let $ \nabla \bar{f}_n (\theta)$ denote $\frac{1}{n} \sum_{i=1}^n  \nabla f(X_i, \theta )$ for ease of exposition.

\begin{proof}[Proof of Lemma \ref{lmm:gradient_sub_gaussian}]
Let $\calV=\{v_1, \ldots, v_{N_{1/2} }\}$ denote an $\frac{1}{2}$-cover of unit sphere $B$.
It is shown in \cite[Lemma 5.2, Lemma 5.3]{vershynin2010nonasym} that $\log N_{1/2} \le d \log 6$, and
$$
\left \| \nabla \bar{f}_n (\theta^\ast)  - \nabla F(\theta^* ) \right \|
\le 2 \sup_{v \in \calV}  \left\{ \iprod{ \nabla \bar{f}_n (\theta^\ast)     - \nabla F(\theta^* ) }{v} \right\}.
$$
Note that since $\nabla F(\theta^*)=0$, it holds that $\nabla f(X_i, \theta^* )  - \nabla F(\theta^* )=\nabla f(X_i, \theta^* )$. By Assumption \ref{ass:gradient_sub_gaussian} and the condition that $\Delta_1 \le \sigma_1^2/\alpha_1$,
it follows from concentration inequalities for sub-exponential random variables given in Theorem \ref{thm:sum_sub_exponential} that, for $v\in \calV$
$$
\prob{ \iprod{ \nabla \bar{f}_n (\theta^\ast)   - \nabla F(\theta^* ) }{v} \ge \Delta_1}
\le \exp \left( -n \Delta_1^2/(2\sigma_1^2) \right).
$$
Recall that in $\calV$ contains at most $6^d$ vectors. In view of the union bound, it further yields that
\begin{align*}
\prob{ 2 \sup_{v \in \calV}  \left\{ \iprod{\nabla \bar{f}_n (\theta^\ast)   - \nabla F(\theta^* ) }{v} \right\} \ge 2 \Delta_1}
&\le 6^d \exp \left( -n \Delta_1^2/(2\sigma_1^2) \right) \\
&= \exp \left( -n \Delta_1^2/(2\sigma_1^2) + d \log 6 \right).
\end{align*}
Therefore,
\begin{align*}
\prob{ \left \| \nabla \bar{f}_n (\theta^\ast)   - \nabla F(\theta^* ) \right \| \ge 2 \Delta_1}
\le \exp \left( -n \Delta_1^2/(2\sigma_1^2) + d \log 6 \right).
\end{align*}
%
%
\end{proof}

In addition to the ``identifiability" of the optimal $\theta^*$ using sample gradients $\nabla f(X, \theta^* )$, similar to the smoothness requirements of the population gradient $\nabla F (\cdot)$ stated in Assumption \ref{ass:pop_risk_smooth}, some smoothness properties (in stochastic sense) of the sample gradients $\nabla f(X, \cdot )$ are also desired. Next, we define gradient difference
\begin{align}
\label{difference1}
h(x,\theta) \triangleq \nabla f(x,\theta)-\nabla f(x,\theta^{*}),
\end{align}
which characterizes the deviation of random gradient $\nabla f(x, \theta)$
from $\nabla f(x, \theta^*)$.  Note that
\begin{align}
\label{difference2}
\expect{h(X,\theta)}=\nabla F(\theta)-\nabla F(\theta^{*})
\end{align}
for each $\theta$. The following assumptions ensure that
for every $\theta$,
$h(x,\theta)$ normalized by $\|\theta-\theta^*\|$ is
also sub-exponential.

\begin{assumption}\label{ass:gradient_sub_exp}
There exist positive constants $\sigma_2$ and $ \alpha_2$ such that
for any $\theta \in \Theta$ with $\theta \neq \theta^*$ and unit vector $v \in B$,
$\Iprod{ h(X,\theta)-\expect{h(X,\theta)} }{v}/\|\theta-\theta^*\|$
is sub-exponential with scaling parameters $(\sigma_2, \alpha_2)$, \ie,
for all $| \lambda| \le \frac{1}{\alpha_2}$,
$$
\sup_{\theta \in \Theta, v \in B} \expect{\exp \left( \frac{  \lambda \Iprod{ h(X,\theta)-\expect{h(X,\theta) } }{v} }{\|\theta-\theta^*\|} \right)}
\le e^{\sigma_2^2 \lambda^2 /2}.
$$
\end{assumption}

The following lemma bounds the deviation of $(1/n) \sum_{i=1}^n  h(X_i, \theta)$
from $\expect{h(X,\theta)}$ for every $\theta \in \Theta$ under Assumption \ref{ass:gradient_sub_exp}.
Its proof is similar to that of Lemma \ref{lmm:gradient_sub_gaussian} and thus is omitted.
\begin{lemma}\label{lmm:gradient_sub_exp}
Suppose Assumption \ref{ass:gradient_sub_exp} holds and fix any $\theta \in \Theta$.
Let
\begin{align}
\Delta'_1 (n, d, \delta, \sigma_2)=\sqrt{2} \sigma_2  \sqrt{ \frac{d \log 6 + \log(3/\delta) }{ n}  }. \label{eq:s_prime}
\end{align}
If  $\Delta_1'(n, d, \delta, \sigma_2) \le \sigma_2^2 /\alpha_2$, then
$$
\prob{ \left \|\frac{1}{n}\sum_{i=1}^{n}h (X_{i}, \theta)-\expect{h (X, \theta) } \right \|
>2  \Delta'_1(n, d, \delta, \sigma_2) \| \theta-\theta^*\| } \le \frac{\delta}{3}.
$$
\end{lemma}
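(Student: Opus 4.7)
The plan is to follow the same covering-plus-Bernstein strategy used in the proof of Lemma~\ref{lmm:gradient_sub_gaussian}, the only twist being that the sub-exponential parameters now apply to the normalized random vector $h(X,\theta)/\|\theta-\theta^\ast\|$, so the concentration radius picks up an extra factor of $\|\theta-\theta^\ast\|$.

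First, I would fix $\theta\in\Theta$ with $\theta\neq\theta^\ast$ (the case $\theta=\theta^\ast$ is trivial since $h(X,\theta^\ast)=0$) and invoke the standard net lemma (Lemmas~5.2--5.3 of~\cite{vershynin2010nonasym}), which gives a $\tfrac{1}{2}$-cover $\calV=\{v_1,\ldots,v_{N_{1/2}}\}$ of the unit sphere $B$ with $\log N_{1/2}\le d\log 6$, together with the deterministic bound
\[
\left\|\frac{1}{n}\sum_{i=1}^{n}h(X_i,\theta)-\expect{h(X,\theta)}\right\|
\;\le\; 2\sup_{v\in\calV}\left\langle \frac{1}{n}\sum_{i=1}^{n}h(X_i,\theta)-\expect{h(X,\theta)},\;v\right\rangle.
\]

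Next, I would apply Assumption~\ref{ass:gradient_sub_exp}, which states that for every fixed unit vector $v$, the centered scalar random variable $\langle h(X,\theta)-\expect{h(X,\theta)},v\rangle / \|\theta-\theta^\ast\|$ is sub-exponential with parameters $(\sigma_2,\alpha_2)$. Averaging over the $n$ i.i.d.\ samples $X_1,\ldots,X_n$ and applying the Bernstein-type tail bound for sums of sub-exponentials (Theorem~\ref{thm:sum_sub_exponential}) to the quantity $\langle\tfrac{1}{n}\sum_i h(X_i,\theta)-\expect{h(X,\theta)},v\rangle/\|\theta-\theta^\ast\|$, one obtains, for each $v\in\calV$,
\[
\prob{\left\langle \frac{1}{n}\sum_{i=1}^{n}h(X_i,\theta)-\expect{h(X,\theta)},v\right\rangle \ge \Delta'_1\,\|\theta-\theta^\ast\|}
\;\le\; \exp\!\left(-\frac{n(\Delta'_1)^2}{2\sigma_2^2}\right),
\]
valid in the sub-Gaussian regime of the mixed tail, which is precisely the regime $\Delta'_1\le \sigma_2^2/\alpha_2$ assumed in the hypothesis.

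Finally, I would take a union bound over the at most $6^d$ vectors in $\calV$ and combine with the net inequality above to get
\[
\prob{\left\|\frac{1}{n}\sum_{i=1}^{n}h(X_i,\theta)-\expect{h(X,\theta)}\right\|\ge 2\Delta'_1\,\|\theta-\theta^\ast\|}
\;\le\; \exp\!\left(-\frac{n(\Delta'_1)^2}{2\sigma_2^2}+d\log 6\right).
\]
Substituting the definition $\Delta'_1=\sqrt{2}\sigma_2\sqrt{(d\log 6+\log(3/\delta))/n}$ causes the term $d\log 6$ to cancel exactly, leaving the upper bound $\delta/3$, which yields the claimed tail bound. There is no real obstacle here, since the sub-exponential scaling of $h$ has been pushed into the assumption; the only minor bookkeeping item is to verify that $\theta\neq\theta^\ast$ does not need to be singled out in the final inequality (it does not, since both sides vanish when $\theta=\theta^\ast$ by~\eqref{difference1}--\eqref{difference2}).
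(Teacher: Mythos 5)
Your proof is correct and follows exactly the route the paper intends: the paper omits the proof of this lemma, stating only that it is ``similar to that of Lemma~\ref{lmm:gradient_sub_gaussian},'' and your argument is precisely that adaptation --- the $\tfrac12$-net of the unit sphere with $\log N_{1/2}\le d\log 6$, the sub-exponential tail bound from Theorem~\ref{thm:sum_sub_exponential} applied in the sub-Gaussian regime guaranteed by $\Delta_1'\le\sigma_2^2/\alpha_2$, and a union bound over the net, with the extra factor $\|\theta-\theta^\ast\|$ carried through from the normalization in Assumption~\ref{ass:gradient_sub_exp}. The arithmetic cancellation of $d\log 6$ leaving $\delta/3$ checks out, and your handling of the degenerate case $\theta=\theta^\ast$ is a reasonable (if unnecessary in the paper's own telling) piece of bookkeeping.
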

\begin{remark}
Similar to $\Delta_1(n, d, \delta, \sigma_2)$,  if $\delta$, $\sigma_1$, and $\sigma_2$ are fixed,
and $d=o(n)$, then for all sufficiently large $n$, it holds that
$\Delta_1'(n, d, \delta, \sigma_2) \le \sigma_2^2 /\alpha_2.$
\end{remark}
We write $\Delta_1'(n, d, \delta, \sigma_2)$ as $\Delta_1'$ or $\Delta_1'(n)$ for short.


Assumption \ref{ass:gradient_sub_gaussian} and Assumption \ref{ass:gradient_sub_exp} can be potentially relaxed at an expense of looser concentration bounds.
Note that Assumption \ref{ass:gradient_sub_exp}, roughly speaking, only imposes some smoothness condition w.\ r.\ t.\ the optimal model $\theta^*$. To mimic the Lipschitz continuity of the sample gradients (in stochastic sense), we impose the following assumption, which holds automatically if we strengthen Assumption \ref{ass:gradient_sub_exp} by replacing $\theta^*$ with an arbitrary $\theta'$ such that $\theta\not=\theta'$.
%
%
In general, Assumption \ref{ass:bounded_hessian} is strictly weaker than the strengthened version of Assumption \ref{ass:gradient_sub_exp}.
\begin{assumption}\label{ass:bounded_hessian}
For any $\delta \in (0,1)$, there exists an $M'=M'(n, \delta)$ that is non-increasing in $n$ such that
$$
\prob{ \sup_{ \theta, \theta' \in \Theta: \theta \neq \theta'}  \frac{ \| \frac{1}{n} \sum_{i=1}^n  \pth{\nabla  f(X_i,\theta) - \nabla f(X_i, \theta')} \| }{\| \theta - \theta'\| }  \le M'} \ge 1-\frac{\delta}{3}.
$$
\end{assumption}



 With Assumption \ref{ass:gradient_sub_gaussian}--Assumption \ref{ass:bounded_hessian}, we apply the celebrated $\epsilon$-net argument to prove the averaged random gradients {\em uniformly} converges to $\nabla F(\cdot)$.

 For a given real number $r>0$, define $\Delta_2$ as follows.
\begin{align}
\Delta_2 (n) = \sigma_2 \sqrt{\frac{2}{n} } \sqrt{d \log \frac{18 M \vee M'}{ \sigma_2} + \frac{1}{2} d \log \frac{n}{d} + \log \left( \frac{6\sigma_2^2 r\sqrt{n} }{\alpha_2 \sigma_1 \delta} \right)}. \label{eq:def_t}
\end{align}

\begin{proposition}\label{prop:uniform_converg}
Suppose Assumption \ref{ass:gradient_sub_gaussian} -- Assumption \ref{ass:bounded_hessian} hold,
and  $\Theta \subset  \{ \theta: \| \theta-\theta^*\| \le r \sqrt{d} \}$ for some positive parameter $r$.
For any $\delta \in (0,1)$ and any integer $n$, recall $\Delta_1$ defined in \eqref{eq:s} and
define $\Delta_2$ as in \eqref{eq:def_t}.
If  $\Delta_1 \le \sigma_1^2/\alpha_1$ and $\Delta_2 \le \sigma_2^2/\alpha_2$,
then
$$
\prob{ \forall \theta \in \Theta: \left \|  \frac{1}{n} \sum_{i=1}^n \nabla f(X_i, \theta) - \nabla F(\theta) \right\|
\le 8 \Delta_2 \| \theta-\theta^*\|+ 4 \Delta_1} \ge 1-\delta.
$$
\end{proposition}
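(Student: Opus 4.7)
The plan is to use an $\epsilon$-net on $\Theta$ together with the natural decomposition of the empirical gradient error into a ``pointwise at $\theta^*$'' piece and a ``gradient increment'' piece. Since $\nabla F(\theta^*)=0$ and $\Expect h(X,\theta)=\nabla F(\theta)-\nabla F(\theta^*)=\nabla F(\theta)$, I would write
\begin{align*}
\frac{1}{n}\sum_{i=1}^n \nabla f(X_i,\theta) - \nabla F(\theta) = \underbrace{\frac{1}{n}\sum_{i=1}^n \nabla f(X_i,\theta^*) - \nabla F(\theta^*)}_{(\mathrm{I})} + \underbrace{\frac{1}{n}\sum_{i=1}^n h(X_i,\theta) - \Expect h(X,\theta)}_{(\mathrm{II})(\theta)}.
\end{align*}
Term $(\mathrm{I})$ does not depend on $\theta$ and is bounded by $2\Delta_1$ with probability at least $1-\delta/3$ by Lemma \ref{lmm:gradient_sub_gaussian}; this already supplies part of the additive $4\Delta_1$ in the target bound, with the remaining $2\Delta_1$ reserved to absorb discretization error. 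It remains to control $\sup_{\theta\in\Theta}\|(\mathrm{II})(\theta)\|$.

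For the uniform bound I would build an $\epsilon$-net $\calT_\epsilon$ of $\Theta\subset\{\theta:\|\theta-\theta^*\|\le r\sqrt{d}\}$ with $|\calT_\epsilon|\le(3r\sqrt{d}/\epsilon)^d$, apply Lemma \ref{lmm:gradient_sub_exp} at each $\theta_j\in\calT_\epsilon$ with failure level $\delta_j=\delta/(3|\calT_\epsilon|)$, and union bound to get
\begin{align*}
\|(\mathrm{II})(\theta_j)\| \le 2\Delta_1'(n,d,\delta_j,\sigma_2)\,\|\theta_j-\theta^*\| \qquad \forall\,\theta_j\in\calT_\epsilon
\end{align*}
with probability at least $1-\delta/3$. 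For a general $\theta\in\Theta$ I would pick the nearest net point $\theta_j$ and telescope
\begin{align*}
(\mathrm{II})(\theta) = \underbrace{\frac{1}{n}\sum_i\bigl[\nabla f(X_i,\theta)-\nabla f(X_i,\theta_j)\bigr]}_{(\mathrm{A})} + (\mathrm{II})(\theta_j) + \underbrace{\bigl[\nabla F(\theta_j)-\nabla F(\theta)\bigr]}_{(\mathrm{C})}.
\end{align*}
On the event of Assumption \ref{ass:bounded_hessian} (probability at least $1-\delta/3$), $\|(\mathrm{A})\|\le M'\epsilon$; by Assumption \ref{ass:pop_risk_smooth}, $\|(\mathrm{C})\|\le M\epsilon$; and $\|\theta_j-\theta^*\|\le\|\theta-\theta^*\|+\epsilon$, so the middle term contributes $2\Delta_1'(\delta_j)(\|\theta-\theta^*\|+\epsilon)$.

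The final step is to calibrate $\epsilon$ to match the constants in $\Delta_2$. Taking $\epsilon$ of order $\sigma_2/((M\vee M')\sqrt{n/d})$ makes $(M+M')\epsilon$ small, and $\log|\calT_\epsilon|\lesssim d\log(18(M\vee M')/\sigma_2)+(d/2)\log(n/d)$, which are exactly the first two logarithmic contributions in $\Delta_2$; the third term $\log(6\sigma_2^2 r\sqrt{n}/(\alpha_2\sigma_1\delta))$ absorbs $\log(1/\delta)$ and the remaining constants. Hence $\Delta_1'(n,d,\delta_j,\sigma_2)\le 4\Delta_2$, so $\|(\mathrm{II})(\theta)\|\le 8\Delta_2\|\theta-\theta^*\| + 8\Delta_2\,\epsilon + (M+M')\epsilon$, and the $\epsilon$-dependent residual is below $2\Delta_1$ under the assumed lower bound on $n$. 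A final union bound over the three failure events gives the claim with probability at least $1-\delta$.

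I expect the main obstacle to be precisely this bookkeeping: verifying that the chosen $\epsilon$ simultaneously (i) produces $\log|\calT_\epsilon|$ matching the exact logarithmic structure of $\Delta_2$ in \eqref{eq:def_t}, (ii) keeps the discretization slack $(M+M'+2\Delta_1'(\delta_j))\epsilon$ under $2\Delta_1$, and (iii) preserves the precondition $\Delta_1'(n,d,\delta_j,\sigma_2)\le \sigma_2^2/\alpha_2$ needed to invoke Lemma \ref{lmm:gradient_sub_exp}, which I would deduce from the hypothesis $\Delta_2\le\sigma_2^2/\alpha_2$ via the comparison $\Delta_1'(\delta_j)\le 4\Delta_2$.
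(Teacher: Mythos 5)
Your overall architecture (decompose into the error at $\theta^*$ plus the increment $h$, discretize, and use Assumptions \ref{ass:gradient_sub_gaussian}--\ref{ass:bounded_hessian} to control the three pieces) is the same as the paper's, but there is a genuine gap in the step you yourself flag as the main obstacle: a \emph{single} $\epsilon$-net of fixed resolution $\epsilon \asymp \sigma_2\sqrt{d/n}/(M\vee M')$ over the ball $\{\|\theta-\theta^*\|\le r\sqrt{d}\}$ has $\log|\calT_\epsilon| \le d\log(3r\sqrt{d}/\epsilon) = d\big[\log\big(3(M\vee M')/\sigma_2\big) + \log r + \tfrac12\log n\big]$, i.e.\ it necessarily carries a $d\log r$ term. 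That term does not appear in \eqref{eq:def_t}, where $r$ enters only once, inside $\log\big(6\sigma_2^2 r\sqrt{n}/(\alpha_2\sigma_1\delta)\big)$, without the factor $d$. So your claim that $\log|\calT_\epsilon|$ matches ``exactly the first two logarithmic contributions in $\Delta_2$'' and that the third term ``absorbs the remaining constants'' is false, and the downstream comparison $\Delta_1'(n,d,\delta_j,\sigma_2)\le 4\Delta_2$ fails for large $r$. This is not a cosmetic constant: in the intended regime $\log r = O(d\log(N/k))$ your version of $\Delta_2$ would pick up an extra $d^2\log(N/k)/n$ under the square root, degrading the sample-size requirement by a factor of $d$.

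The paper avoids this with a multi-scale (peeling) argument. It sets $\tau=\frac{\alpha_2\sigma_1}{2\sigma_2^2}\sqrt{d/n}$, covers each nested ball $\Theta_\ell=\{\|\theta-\theta^*\|\le\tau\ell\}$, $1\le\ell\le\ell^*=\lceil r\sqrt{d}/\tau\rceil$, with a net of resolution $\epsilon_\ell=\frac{\sigma_2\tau\ell}{M\vee M'}\sqrt{d/n}$ \emph{proportional to the radius of the layer}, so that each layer's covering number $(3\tau\ell/\epsilon_\ell)^d$ is independent of both $\ell$ and $r$; the parameter $r$ then enters only through the union bound over the $\ell^*$ layers, contributing a single $\log\ell^*$ inside $\Delta_2$. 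The proportional resolution has a second benefit you also lose with a fixed net: the discretization error $(M\vee M')\epsilon_\ell\le\Delta_2\tau\ell$ is absorbed into the \emph{multiplicative} term $\Delta_2\|\theta-\theta^*\|$ (using $\ell\le 2(\ell-1)$ for $\ell\ge2$), rather than having to be squeezed under the additive $2\Delta_1$ budget, which in your scheme would additionally require $\sigma_2\lesssim\sigma_1$ --- a relation not guaranteed by the assumptions. To repair your proof you would need to replace the single net by this layered construction (or otherwise make the net resolution scale with $\|\theta-\theta^*\|$).
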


\begin{proof}
The proof is based on the classical $\epsilon$-net argument.
Let
$$
\tau= \frac{\alpha_2 \sigma_1  }{2 \sigma_2^2} \sqrt{\frac{d}{n}}
\quad \text{ and }  \quad \ell^\ast = \lceil r\sqrt{d}/\tau \rceil.
$$
Henceforth, for ease of exposition, we assume $\ell^*$ is an integer. For integers $1 \le \ell \le \ell^*$, define
$$
\Theta_\ell \triangleq \left\{ \theta: \| \theta- \theta^* \| \le \tau \ell \right\}.
$$
For a given $\ell$, let $\theta_1, \ldots, \theta_{N_{\epsilon_{\ell}}}$ be an $\epsilon_{\ell}$-cover of
$\Theta_\ell$, where  $\epsilon_{\ell}$ is given by
$$
\epsilon_{\ell}= \frac{\sigma_2 \tau \ell}{M \vee M'}\sqrt{ \frac{d}{n} },
$$
where $M \vee M'= \max \{M, M'\}$.
By \cite[Lemma 5.2]{vershynin2010nonasym},
$\log N_{\epsilon_{\ell}} \le d \log (3 \tau \ell/\epsilon_{\ell})$.
Fix any $\theta \in \Theta_\ell$. There exists a $1 \le j_{\ell} \le N_{\epsilon_{\ell}}$
such that $\|\theta - \theta_{ j_{\ell}} \|_2 \le \epsilon_{\ell}$.
Recall that we let $ \nabla \bar{f}_n (\theta)$ denote $\frac{1}{n} \sum_{i=1}^n  \nabla f(X_i, \theta )$.
By triangle's inequality,
\begin{align}
&\left \| \nabla \bar{f}_n (\theta) - \nabla F(\theta) \right\| \le  \left \| \nabla F(\theta)  - \nabla F(\theta_{j_{\ell}}) \right\| +\left \|  \nabla \bar{f}_n (\theta) - \nabla \bar{f}_n (\theta_{j_{\ell}} )  \right\| + \left \|  \nabla \bar{f}_n (\theta_{j_{\ell}} )  - \nabla F(\theta_{j_{\ell}}) \right\|.
\label{eq:triangle_1}
\end{align}
In view of Assumption \ref{ass:pop_risk_smooth},
\begin{align}
\left \| \nabla F(\theta)  - \nabla F(\theta_{j_{\ell}}) \right\|
\le M \| \theta- \theta_{j_{\ell}} \| \le M \epsilon_{\ell}, \label{eq:approx_error_1}
\end{align}
where the last inequality holds because by the construction of $\epsilon$-net, and the fact that for a given $\theta$, $\theta_{j_{\ell}}$ is chosen in such a way that $\| \theta- \theta_{j_{\ell}} \| \le \epsilon_{\ell}$.

Define event
$$
\calE_1 = \left\{ \sup_{ \theta, \theta' \in \Theta: \theta \neq \theta'} \frac{ \|  \nabla \bar{f}_n (\theta)   -  \nabla \bar{f}_n (\theta')  \| }{\| \theta - \theta'\| }  \le M'  \right \}.
$$
By Assumption \ref{ass:bounded_hessian}, we have $\prob{\calE_1} \ge 1-\delta/3$.  On event $\calE_1$, we have
\begin{align}
\sup_{\theta \in \Theta}  \left \| \nabla \bar{f}_n (\theta)   -  \nabla \bar{f}_n ( \theta_{j_{\ell}} )   \right\|
& \le  M' \epsilon_{\ell}.\label{eq:approx_error_2}
\end{align}
By triangle's inequality again,
\begin{align}
 \left \|   \nabla \bar{f}_n ( \theta_{j_{\ell}} )  - \nabla F(\theta_{j_{\ell}} )   \right\|
 &\le  \left \|  \nabla \bar{f}_n ( \theta^* )  - \nabla F(\theta^* ) \right \| + \left \|  \nabla \bar{f}_n ( \theta_{j_{\ell}} )  -   \nabla \bar{f}_n ( \theta^*)  -\pth{\nabla F(\theta_{j_{\ell}}) - \nabla F(\theta^*)} \right \| \nonumber\\
& \le  \left \| \nabla \bar{f}_n ( \theta^* )  - \nabla F(\theta^* ) \right \| + \left \|  \frac{1}{n} \sum_{i=1}^n    h(X_i,\theta_{j_{\ell}})  -  \expect{h(X,\theta_{j_{\ell}})}  \right \|, \label{eq:triangle_2}
\end{align}
where function $h(x, \cdot)$ is defined in \eqref{difference1}.
Define event
$$
\calE_2= \left\{ \left \| \nabla \bar{f}_n ( \theta^* )  - \nabla F(\theta^* ) \right \|
 \le 2\Delta_1 \right\}
$$
and event
$$
\calF_\ell = \left\{
 \sup_{ 1 \le j \le N_\epsilon} \left \|\frac{1}{n}\sum_{i=1}^{n}h(X_i,\theta_j)- \expect{h(X,\theta_j)}\right \|
\le 2 \tau \ell \Delta_2  \right\},
$$
where $\Delta_2$ is defined in \eqref{eq:def_t} and satisfies
\begin{align}
\label{eq:def_t1}
\Delta_2 = \sqrt{2} \sigma_2 \sqrt{ \frac{d \log 6 + d \log (3\tau  \ell/\epsilon_{\ell}) + \log(3 \ell^*/\delta)  }{ n}  } .
\end{align}
In \eqref{eq:def_t}, note that $\Delta_2$ is independent of $\ell$, due to the choice of $\epsilon_{\ell}$ made earlier. It is easy to check that \eqref{eq:def_t} and \eqref{eq:def_t1} are equivalent.

Since $\Delta_1 \le \sigma_1^2/\alpha_1$, it follows from Lemma \ref{lmm:gradient_sub_gaussian} that
$ \prob{ \calE_2} \ge 1- \delta/3. $ Similarly, since $\Delta_2 \le \sigma_2^2 /\alpha_2$,
by Lemma \ref{lmm:gradient_sub_exp},
$\prob{\calF_\ell } \ge 1-\delta/(3\ell^*) $. In particular,
\begin{align}
\label{2b0}
\nonumber
\prob{\calF_{\ell}^c}& = \prob{ \sup_{ 1 \le j \le N_{\epsilon_{\ell}}} \left \|\frac{1}{n}\sum_{i=1}^{n}h(X_i,\theta_j)- \expect{h(X,\theta_j)}\right \|
> 2 \tau \ell \Delta_2 }\\
\nonumber
&\le \sum_{j=1}^{N_{\epsilon_{\ell}}} \prob{ \left \|\frac{1}{n}\sum_{i=1}^{n}h(X_i,\theta_j)- \expect{h(X,\theta_j)}\right \|
> 2 \tau \ell \Delta_2 }\\
& \le \frac{\delta}{3\ell^*} \frac{1}{\pth{\frac{3\tau\ell}{\epsilon_{\ell}}}^d} \pth{\frac{3\tau\ell}{\epsilon_{\ell}}}^d = \frac{\delta}{3\ell^*}.
\end{align}
Therefore, we have $\prob{\calF_\ell } \ge 1-\delta/(3\ell^*) $.

In conclusion, by combining \eqref{eq:triangle_1}, \eqref{eq:approx_error_1}, \eqref{eq:approx_error_2} and \eqref{eq:triangle_2},
it follows that on event $\calE_1 \cap \calE_2 \cap \calF_\ell $,
\begin{align*}
\sup_{\theta \in \Theta_\ell} \left \| \nabla \bar{f}_n ( \theta ) - \nabla F(\theta) \right\|
&\le (M+M') \epsilon_{\ell} + 2 \Delta_1 + 2 \Delta_2 \tau \ell \\
&\le 4 \Delta_2 \tau \ell+2\Delta_1,
\end{align*}
where the last inequality holds due to $(M \vee M')\epsilon_{\ell} \le \Delta_2 \tau \ell$.
Let
$$
\calE=\calE_1 \cap \calE_2   \cap \left( \cap_{\ell=1}^{ \ell^* } \calF_\ell \right).
$$
It follows from the union bound, $\prob{\calE} \ge 1- \delta.$
Moreover, suppose event $\calE$ holds. Then for all $\theta \in \Theta_{\ell^*}$,
there exists an $1 \le \ell \le \ell^*$ such that $(\ell-1)\tau <\| \theta - \theta^* \| \le \ell \tau$.
If $\ell \ge 2$, then $\ell \le 2(\ell-1)$ and thus
$$
 \left \| \nabla \bar{f}_n ( \theta )- \nabla F(\theta) \right\| \le 4 \Delta_2 \tau \ell+2\Delta_1
  \le 8 \Delta_2 \| \theta-\theta^* \|+ 2 \Delta_1.
$$
If $\ell =1$, then
$$
 \left \| \nabla \bar{f}_n ( \theta )- \nabla F(\theta) \right\| \le 4 \Delta_2 \tau +2\Delta_1
 \le  4 \Delta_1,
$$
where the last inequality follows from our choice of $\tau$ and the
assumption that $\Delta_2 \le \sigma_2^2/\alpha_2$ and $\Delta_1 \ge \sigma_1 \sqrt{d/n}$.
In conclusion, on event $\calE$,
$$
\sup_{\theta \in \Theta_{\ell^*} } \left \| \nabla \bar{f}_n ( \theta ) - \nabla F(\theta) \right\|
\le 4 \Delta_1 +  8 \Delta_2  \| \theta-\theta^*\| .
$$
The proposition follows by the assumption that $\Theta \subset \Theta_{\ell^*}$.



\end{proof}

\begin{theorem}\label{thm:unif_convg}
Suppose Assumption \ref{ass:gradient_sub_gaussian} -- Assumption \ref{ass:bounded_hessian} hold,
and  $\Theta \subset  \{ \theta: \| \theta-\theta^*\| \le r \sqrt{d} \}$ for some positive parameter $r$.
For any $\delta \in (0,1)$ and any integer $n$, define $\Delta_1 (n)$ and $\Delta_2 (n)$ as in \eqref{eq:s} and \eqref{eq:def_t}, respectively.
%
If $\Delta_1 (N/k) \le \sigma_1^2/\alpha_1$ and $\Delta_2 (N/k) \le \sigma_2^2/\alpha_2$, then for every $1 \le \ell \le k$,
$$
\prob{ \forall \theta \in \Theta: C_\alpha \| Z_\ell (\theta) \| \le \xi_2 \| \theta-\theta^*\|+ \xi_1} \ge 1-\delta,
$$
where $
\xi_1= 4 C_\alpha \times \Delta_1 (N/k)$
and $\xi_2 = 8 C_\alpha \times \Delta_2 (N/k).$
\end{theorem}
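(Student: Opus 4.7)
The plan is to observe that this theorem is essentially a direct consequence of \prettyref{prop:uniform_converg} applied batch by batch, with a bookkeeping step to match the notation. Fix any batch index $1 \le \ell \le k$. By the definition in \eqref{batch},
\[
Z_\ell(\theta) = \frac{1}{N/k} \sum_{i \in \calI_\ell} \nabla f(X_i, \theta) - \nabla F(\theta),
\]
where $\calI_\ell = \bigcup_{j = (\ell-1)b+1}^{\ell b} \calS_j$ is the set of indices of the $N/k$ data samples collectively stored at the machines in batch $\ell$. Because the $X_i$'s are i.i.d.\ from $\mu$ and the $\calS_j$'s are disjoint, the random variables $\{X_i : i \in \calI_\ell\}$ are i.i.d.\ from $\mu$ with sample size exactly $n = N/k$.

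First I would invoke \prettyref{prop:uniform_converg} with this choice of $n = N/k$ and sample set $\{X_i : i \in \calI_\ell\}$. The hypotheses of the proposition are exactly: (a) Assumptions~\ref{ass:gradient_sub_gaussian}--\ref{ass:bounded_hessian}, which are the standing hypotheses of the theorem; (b) $\Theta \subset \{\theta : \|\theta - \theta^*\| \le r\sqrt{d}\}$, again assumed here; and (c) $\Delta_1(N/k) \le \sigma_1^2/\alpha_1$ and $\Delta_2(N/k) \le \sigma_2^2/\alpha_2$, the two quantitative conditions in the statement. Hence the proposition yields, with probability at least $1-\delta$,
\[
\sup_{\theta \in \Theta} \bigl\| Z_\ell(\theta) \bigr\| \le 8\, \Delta_2(N/k)\, \|\theta - \theta^*\| + 4\, \Delta_1(N/k).
\]

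Second, multiplying both sides by the positive constant $C_\alpha$ and substituting the definitions $\xi_1 = 4 C_\alpha \Delta_1(N/k)$ and $\xi_2 = 8 C_\alpha \Delta_2(N/k)$ directly yields
\[
C_\alpha \bigl\| Z_\ell(\theta) \bigr\| \le \xi_2 \|\theta - \theta^*\| + \xi_1 \quad \forall \theta \in \Theta,
\]
with probability at least $1 - \delta$, which is the claim.

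The only thing to verify carefully is that the distribution of $Z_\ell$ is the one covered by \prettyref{prop:uniform_converg}: the samples in batch $\ell$ are i.i.d.\ and disjoint from samples in other batches, so no conditioning or independence subtleties arise, and the bound holds identically for each $\ell$. There is no genuine obstacle here since the substantive uniform concentration work (the $\epsilon$-net argument over shells $\Theta_\ell$, combined with the sub-exponential tail bounds of Lemmas~\ref{lmm:gradient_sub_gaussian} and \ref{lmm:gradient_sub_exp} and the global Lipschitz event of Assumption~\ref{ass:bounded_hessian}) has already been done inside \prettyref{prop:uniform_converg}. The theorem therefore reduces to notational translation: reindexing the sample average as a batch average with $n = N/k$, and rescaling the deviation bound by $C_\alpha$ to produce the constants $\xi_1$ and $\xi_2$ that feed into the definition of the good event $\calE_{\alpha, \xi_1, \xi_2}$ in \eqref{good event}.
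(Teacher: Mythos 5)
Your proposal is correct and follows exactly the paper's own route: the paper likewise observes that each $Z_\ell$ is distributed as the centered average of $n = N/k$ i.i.d.\ sample gradients and then invokes Proposition~\ref{prop:uniform_converg}, with the multiplication by $C_\alpha$ being the trivial rescaling that produces $\xi_1$ and $\xi_2$. Your version simply spells out the bookkeeping (the index set $\calI_\ell$, the verification of the proposition's hypotheses) more explicitly than the paper does.
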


\begin{proof}
Recall that $Z_{\ell}$ is defined in \eqref{batch}. Note that for each $1 \le \ell \le k$, $Z_\ell$ has the same distribution as
the average of $N/k$ i.i.d.\ random gradients $f(X_i, \theta)$ subtracted by
$\nabla F(\theta)$. Hence, Theorem \ref{thm:unif_convg} readily follows from Proposition \ref{prop:uniform_converg}.
\end{proof}
\begin{remark}
Suppose $\sigma_1,$ $\alpha_1,$ $\sigma_2,$ $\alpha_2$ are all of $\Theta(1)$,
$\log (M \vee M') = O(\log d)$, $\log (1/\delta) = O(d)$ and $\log r = O(d \log (N/k) )$.
In this case, Theorem \ref{thm:unif_convg} implies that if $N/k =\Omega(C_\alpha^2 d \log (N/k))$, then
$$
\xi_1= O\left(C_\alpha \sqrt{kd /N } \right) ~~\text{and}~~
\xi_2 =O\left(C_\alpha \sqrt{ kd \log (N/k)/N } \right).
$$   In particular, those assumptions are indeed satisfied under the linear regression model as shown in Lemma \ref{lmm:least-squares}.
%
\end{remark}

\subsection{Main Theorem}\label{sec:main_formal}

By combining  Theorem \ref{thm:convg_BGD}, Lemma \ref{lmm:geometric_median}, and  Theorem \ref{thm:unif_convg},
we prove the main theorem.

\begin{theorem}\label{thm:main}
Suppose Assumption \ref{ass:pop_risk_smooth}  -- Assumption \ref{ass:bounded_hessian} hold,
and $\Theta \subset  \{ \theta: \| \theta-\theta^*\| \le r \sqrt{d} \}$ for some positive parameter $r$.
Assume $2(1+\epsilon)q \le k \le m$. Fix any constant $\alpha \in (q/k,1/2)$ and any $\delta>0$ such that $\delta \le \alpha -q/k$.
If $\Delta_1 (N/k) \le \sigma_1^2/\alpha_1,$  $\Delta_2 (N/k) \le \sigma_2^2/\alpha_2 $, and
\begin{align*}
\rho = 1- \sqrt{1-L^2/(4M^2)} -   \xi_2 L/(2M^2) >0
\end{align*}
for $\xi_2=8 C_\alpha \times \Delta_2 (N/k)$,
then with probability at least
$$1- \exp (- k D(\alpha-q/k \| \delta)  ),$$
 the iterates $\{\theta_t\}$
given by Algorithm \ref{alg:BGD} with $\eta=L/(2M^2)$ satisfy
$$
\| \theta_t - \theta^* \| \le \left( 1- \rho \right)^t \| \theta_0 - \theta^* \| +   \eta\xi_1/\rho, \quad \forall t \ge 1,
$$
where $\xi_1=4 C_\alpha \times \Delta_1 (N/k)$.
\end{theorem}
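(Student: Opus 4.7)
The plan is to assemble Theorem \ref{thm:main} by composing the three previously established ingredients: the deterministic convergence guarantee on the good event (Theorem \ref{thm:convg_BGD}), the probabilistic lower bound on the good event (Lemma \ref{lmm:geometric_median}), and the uniform concentration of each batch's averaged gradient around $\nabla F$ (Theorem \ref{thm:unif_convg}). All three pieces are already in hand; the task is to check that their hypotheses chain correctly and produce the claimed bound.

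First, I would apply Theorem \ref{thm:unif_convg} with batch sample size $n = N/k$. The hypotheses $\Delta_1(N/k) \le \sigma_1^2/\alpha_1$ and $\Delta_2(N/k) \le \sigma_2^2/\alpha_2$ are assumed in the theorem statement, and Assumptions \ref{ass:gradient_sub_gaussian}--\ref{ass:bounded_hessian} are inherited. This yields, for each individual batch index $1 \le \ell \le k$,
\begin{align*}
\Prob\bigl\{ \forall \theta \in \Theta:\ C_\alpha \|Z_\ell(\theta)\| \le \xi_2 \|\theta - \theta^*\| + \xi_1 \bigr\} \ge 1 - \delta,
\end{align*}
where $\xi_1 = 4 C_\alpha \Delta_1(N/k)$ and $\xi_2 = 8 C_\alpha \Delta_2(N/k)$, which is precisely the per-batch tail bound \eqref{eq:assump_geometric} required to feed into Lemma \ref{lmm:geometric_median}.

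Next, since the batches are disjoint and the data samples are i.i.d., the indicator events across batches are independent. The assumption $2(1+\epsilon)q \le k$ together with $\alpha \in (q/k, 1/2)$ and $\delta \le \alpha - q/k$ guarantees the hypotheses of Lemma \ref{lmm:geometric_median} hold, and the lemma yields
\begin{align*}
\Prob\bigl\{ \calE_{\alpha,\xi_1,\xi_2} \bigr\} \ge 1 - \exp\bigl(-k D(\alpha - q/k \,\|\, \delta)\bigr).
\end{align*}

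Finally, on the event $\calE_{\alpha,\xi_1,\xi_2}$, Theorem \ref{thm:convg_BGD} with step size $\eta = L/(2M^2)$ and the contraction parameter $\rho = 1 - \sqrt{1 - L^2/(4M^2)} - \xi_2 L/(2M^2) > 0$ (which holds by assumption) immediately gives the claimed exponential convergence
\begin{align*}
\|\theta_t - \theta^*\| \le (1-\rho)^t \|\theta_0 - \theta^*\| + \eta \xi_1/\rho, \quad \forall t \ge 1.
\end{align*}
Combining the two probability statements via a union bound (or just invoking the first) completes the proof. There is no genuinely hard step here since all the technical work lives in the auxiliary results; the only bookkeeping subtlety is confirming that the $\xi_1,\xi_2$ produced by Theorem \ref{thm:unif_convg} are the same constants that make $\rho > 0$ and that appear in both Lemma \ref{lmm:geometric_median} and Theorem \ref{thm:convg_BGD}, which they do by construction.
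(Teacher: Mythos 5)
Your proposal is correct and follows exactly the same route as the paper, which proves Theorem \ref{thm:main} precisely by chaining Theorem \ref{thm:unif_convg} (per-batch uniform concentration with $\xi_1 = 4C_\alpha\Delta_1(N/k)$, $\xi_2 = 8C_\alpha\Delta_2(N/k)$) into Lemma \ref{lmm:geometric_median} (probability of the good event $\calE_{\alpha,\xi_1,\xi_2}$) and then invoking Theorem \ref{thm:convg_BGD} on that event. The hypotheses match up as you describe, so there is nothing further to add.
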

Under certain conditions, we are able to further bound $\xi_1$ and $\xi_2$. Next we present a formal statement of Theorem \ref{thm:main_informal}; it readily follows from Theorem \ref{thm:main}.

\begin{theorem}\label{cor:main}
Suppose that Assumption \ref{ass:pop_risk_smooth} -- Assumption \ref{ass:bounded_hessian} hold
such that $L$, $M$,  $\sigma_1,$ $\alpha_1,$ $\sigma_2,$ $\alpha_2$ are all of $\Theta(1)$,
and $\log M' = O\left(\log d \right).$
Assume that
 $\Theta \subset  \{ \theta: \| \theta-\theta^*\| \le r \sqrt{d} \}$ for some positive parameter $r$
such that $\log (r)= O(d \log (N/k))$, and $2(1+\epsilon)q \le k \le m$.
Fix any $\alpha \in (q/k,1/2)$ and any $\delta>0$ such that $\delta \le \alpha -q/k$ and
$\log(1/\delta)= O(d)$.
There exist universal positive constants $c_1, c_2$ such that
if
$$
\frac{N}{k}  \ge c_1 C_\alpha^2 d \log (N/k),$$
then
with probability at least
$$1- \exp (- k D(\alpha-q/k \| \delta)  ),
$$
 the iterates $\{\theta_t\}$
given by Algorithm \ref{alg:BGD} with $\eta=L/(2M^2)$ satisfy
$$
\| \theta_t - \theta^* \| \le \left( \frac{1}{2} + \frac{1}{2} \sqrt{ 1- \frac{L^2}{4M^2} }\right)^t \| \theta_0 - \theta^* \| + c_2 \sqrt{ \frac{dk}{N} }, \quad \forall t \ge 1.
$$
\end{theorem}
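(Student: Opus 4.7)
The plan is to derive Theorem \ref{cor:main} as an immediate specialization of Theorem \ref{thm:main} by verifying its hypotheses under the stronger scaling assumptions on the problem parameters. Since Theorem \ref{thm:main} already yields the same form of high-probability exponential contraction in terms of abstract quantities $\xi_1, \xi_2, \rho$, the task reduces to three bookkeeping steps: (i) verifying $\Delta_1(N/k)\le\sigma_1^2/\alpha_1$ and $\Delta_2(N/k)\le\sigma_2^2/\alpha_2$, (ii) showing $\rho$ is bounded away from zero by at least $\tfrac12\bigl(1-\sqrt{1-L^2/(4M^2)}\bigr)$, and (iii) collapsing the residual $\eta\xi_1/\rho$ to the claimed form $c_2\sqrt{dk/N}$.

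First I would substitute the $\Theta(1)$ scaling of $\sigma_1,\alpha_1,\sigma_2,\alpha_2,L,M$ together with the hypotheses $\log M' = O(\log d)$, $\log r = O(d\log(N/k))$ and $\log(1/\delta)=O(d)$ into the definitions \eqref{eq:s} and \eqref{eq:def_t}. A term-by-term comparison then yields the asymptotic bounds
\begin{align*}
\Delta_1(N/k) \;=\; O\!\left(\sqrt{\tfrac{dk}{N}}\right), \qquad
\Delta_2(N/k) \;=\; O\!\left(\sqrt{\tfrac{dk\log(N/k)}{N}}\right),
\end{align*}
where the hidden constants depend only on the $\Theta(1)$ quantities. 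Under the assumed lower bound $N/k \ge c_1 C_\alpha^2 d\log(N/k)$ with $c_1$ chosen sufficiently large, each of these quantities is $O(1/\sqrt{c_1})$, so picking $c_1$ large enough makes them smaller than $\sigma_1^2/\alpha_1$ and $\sigma_2^2/\alpha_2$ respectively, settling (i).

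For (ii) I would note that $\xi_2 = 8 C_\alpha\Delta_2(N/k) = O\bigl(C_\alpha\sqrt{dk\log(N/k)/N}\bigr) = O(1/\sqrt{c_1})$; enlarging $c_1$ if necessary forces $\xi_2 L/(2M^2) \le \tfrac12\bigl(1-\sqrt{1-L^2/(4M^2)}\bigr)$, from which $\rho \ge \tfrac12\bigl(1-\sqrt{1-L^2/(4M^2)}\bigr)>0$ and therefore $1-\rho \le \tfrac12+\tfrac12\sqrt{1-L^2/(4M^2)}$. This produces the contraction factor stated in the theorem. For (iii), since $\eta,\rho,C_\alpha$ are all $\Theta(1)$ for fixed $\alpha$, we have $\eta\xi_1/\rho = O\bigl(C_\alpha\Delta_1(N/k)\bigr) = O\bigl(\sqrt{dk/N}\bigr)$, which is what $c_2\sqrt{dk/N}$ is designed to absorb. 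Plugging these bounds into the conclusion of Theorem \ref{thm:main} gives exactly the statement of Theorem \ref{cor:main}, with the probability $1-\exp(-k D(\alpha-q/k\|\delta))$ carried over verbatim.

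The only place where real care is needed is the simplification of $\Delta_2$: one must check that the three summands inside its square root, namely $d\log((18 M\vee M')/\sigma_2)$, $\tfrac12 d\log(n/d)$, and $\log(6\sigma_2^2 r\sqrt{n}/(\alpha_2\sigma_1\delta))$, all collapse to $O(d\log(N/k))$ under the stated hypotheses $\log(M\vee M')=O(\log d)$, $\log r = O(d\log(N/k))$ and $\log(1/\delta)=O(d)$. Once this arithmetic is verified, the rest of the argument is mechanical substitution into Theorem \ref{thm:main}, and no additional probabilistic content is needed.
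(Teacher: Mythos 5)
Your proposal is correct and follows essentially the same route as the paper's own proof: both reduce Theorem \ref{cor:main} to Theorem \ref{thm:main} by showing $\Delta_1(N/k)=\Theta(\sqrt{kd/N})$ and $\Delta_2(N/k)=\Theta(\sqrt{kd\log(N/k)/N})$ under the stated scalings, choosing $c_1$ large enough that the sub-exponential conditions hold and $\xi_2 L/(2M^2)\le \tfrac12\bigl(1-\sqrt{1-L^2/(4M^2)}\bigr)$ (hence $\rho\ge\tfrac12-\tfrac12\sqrt{1-L^2/(4M^2)}$), and then absorbing $\eta\xi_1/\rho$ into $c_2\sqrt{dk/N}$. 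No gaps.
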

\begin{proof}
Recall from \eqref{eq:s} that
\begin{align*}
\Delta_1 (N/k, d, \delta, \sigma_1)=\sqrt{2} \sigma_1 \sqrt{ \frac{d \log 6 + \log(3/\delta) }{ N/k}  }.
\end{align*}
When $\sigma_1=\Theta(1)$ and $\log (1/\delta) =O(d)$, it holds that $\Delta_1 (N/k) =\Theta\left( \sqrt{ kd /N} \right)$.
Similarly,  we have  $\Delta_2(N/k) =   \Theta \left( \sqrt{\frac{kd \log (N/k)}{ N} } \right) $.
Hence, there exists an universal
positive constant
$c_1$ such that for all $N/k \ge c_1 C_\alpha^2 d \log (N/k)$, it holds that
$\Delta_1(N/k) \le \sigma_1^2/\alpha_1$, $\Delta_2(N/k) \le \sigma_2^2/\alpha_2$, and
\begin{align}
 \Delta_2 (N/k) \le \frac{M^2}{8C_\alpha L} \left( 1- \sqrt{1 - L^2 /(4M^2) } \right). \label{eq:rho_condition}
\end{align}
Thus we have
$$
\xi_2=8 C_\alpha \times \Delta_2 (N/k) \le \frac{M^2}{L} \left( 1- \sqrt{1 - L^2 /(4M^2) } \right),
$$
and as a consequence,
$$
\rho = 1- \sqrt{1-L^2/(4M^2)} - \frac{\xi_2 L}{ 2M^2 } \ge \frac{1}{2} - \frac{1}{2}  \sqrt{1-L^2/(4M^2)} >0.
$$
Hence, we can apply Theorem \ref{thm:main}. Finally, to finish the proof,
recall that $\eta= L/(2M^2)$ and $\xi_1=4 C_\alpha \times \Delta_1(N/k)$; thus
the term $\eta \xi_1/\rho$ can be bounded as follows:
\begin{align*}
\frac{\eta \xi_1}{\rho}   = \frac{L}{2M^2} \times  \frac{4C_{\alpha} \Delta_1 (N/k)}{\rho }
\le  \frac{L}{2M^2} \times \frac{8C_{\alpha} \Delta_1 (N/k)}{1 - \sqrt{1-L^2/(4M^2)}}  \le c_2 \sqrt{\frac{dk}{N}},
\end{align*}
where $c_2$ is some universal constant.
\end{proof}

\section{Application to Linear Regression}
\label{sec: linear regression}
We illustrate our general results by applying them to
the classical linear regression problem. Let $ X_i = (w_i, y_i) \in \reals^{d} \times \reals $
denote
the input data and define the risk function
$
f(X_i, \theta) = \frac{1}{2} \left( \langle w_i, \theta \rangle - y_i \right)^2.
$
For simplicity, we assume that $y_i$ is indeed generated from a linear model:
$$
y_i = \langle w_i, \theta^* \rangle + \zeta_i ,
$$
where $ \theta^* $ is an unknown true model parameter,
$ w_i \sim N(0, \identity) $ is the covariate vector whose
covariance matrix is assumed to be identity,  and $ \zeta_i \sim N(0,1) $ is i.i.d.\ additive Gaussian noise
independent of $w_i$'s. Intuitively, the inner product $ \langle w_i, \theta^* \rangle $ can be viewed as some ``measurement" of $\theta^*$ -- the signal; and $\zeta_i $ is the additive noise.

The population risk minimization problem \eqref{eq:min_pop_risk} is simply
\begin{align*}
\min_\theta  ~~ \frac{1}{2}\| \theta - \theta^* \|_2^2 + \frac{1}{2},
\end{align*}
where
\begin{align*}
F(\theta) &\triangleq \expect{f(X,\theta)} = \expect{  \frac{1}{2} \left( \langle w, \theta \rangle - y \right)^2}\\
&=  \expect{  \frac{1}{2} \left( \langle w, \theta \rangle - \langle w, \theta^* \rangle - \zeta \right)^2}
= \frac{1}{2}\| \theta - \theta^* \|_2^2 + \frac{1}{2},
\end{align*}
for which $ \theta^* $ is indeed the unique minimum. If the function $F(\cdot)$ can be computed exactly, then $\theta^*$ can be read from its expression directly. The standard gradient descent method for minimizing $F(\cdot)$ is also straightforward. The population gradient is
$\nabla_\theta F(\theta) = \theta - \theta^*$. It is easy to see that the population risk $F$ is $M$-Lipschitz continuous with $M=1$, and $L$-strongly convex with $L=1$.  Hence, Assumption \ref{ass:pop_risk_smooth} is satisfied with $M=L=1$; and the stepsize $\eta=L/(2M^2) = 1/2$.

In practice, unfortunately, since we do not know exactly the distribution of the random input $X$, we can neither read $\theta^*$ from the expression $ F(\cdot)$ nor compute the population gradient $\nabla F(\theta)$ exactly. We are only able to approximate the population risk $F(\cdot)$ or the population gradient $\nabla F(\theta)$. Our focus is the gradient approximation. In particular, for a given random sample, the associated random gradient is given by
$
\nabla f(X, \theta) =w \langle w, \theta-\theta^* \rangle - w \zeta,
$
where $w \sim \calN(0, \identity)$ and $\zeta \sim \calN(0,1)$ that is independent of $w$. 

%

The following lemma verifies that Assumption \ref{ass:gradient_sub_gaussian}--Assumption \ref{ass:bounded_hessian}
are satisfied with appropriate parameters. 
\begin{lemma}\label{lmm:least-squares}
Under the linear regression model, the sample gradient function $\nabla f(X, \cdot)$ satisfies\\
(1) Assumption \ref{ass:gradient_sub_gaussian} with $\sigma_1=\sqrt{2}$
and $\alpha_1=\sqrt{2}$, \\
(2) Assumption \ref{ass:gradient_sub_exp} with $\sigma_2=\sqrt{8}$ and $\alpha_2=8$, \\
(3) and Assumption \ref{ass:bounded_hessian} with $M'(\delta)=d + 2\sqrt{d \log(4/\delta)} + 2 \log(4/\delta).$
\end{lemma}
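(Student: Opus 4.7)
My plan is to handle the three parts separately, each by a direct moment-generating-function (MGF) computation that exploits the explicit Gaussian structure of the model. The key observation, used throughout, is that the sample gradient and its differences have clean closed forms:
\[
\nabla f(X,\theta^*)=-w\zeta,\qquad h(X,\theta)=ww^{\top}(\theta-\theta^*),\qquad \nabla f(X,\theta)-\nabla f(X,\theta')=ww^{\top}(\theta-\theta'),
\]
so that $\mathbb{E}[h(X,\theta)]=\theta-\theta^*$ and $h(X,\theta)-\mathbb{E}[h(X,\theta)]=(ww^{\top}-I)(\theta-\theta^*)$. All three claims then reduce to concentration statements for scalar polynomials of degree at most two in standard Gaussians.

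For part (1), I would fix a unit vector $v$ and write $\langle \nabla f(X,\theta^*),v\rangle=-\zeta \langle w,v\rangle$; since $v$ is unit and $w\sim N(0,I)$ is independent of $\zeta\sim N(0,1)$, this is a product $Z_1 Z_2$ of two independent standard Gaussians. A direct conditioning computation gives $\mathbb{E}[e^{\lambda Z_1 Z_2}]=(1-\lambda^2)^{-1/2}$ for $|\lambda|<1$. The task is then to verify $-\tfrac{1}{2}\log(1-\lambda^2)\le \lambda^2$ for $|\lambda|\le 1/\sqrt{2}$, which follows from the elementary bound $-\log(1-x)\le 2x$ on $x\in[0,1/2]$, yielding $(\sigma_1,\alpha_1)=(\sqrt{2},\sqrt{2})$.

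For part (2), let $u=(\theta-\theta^*)/\|\theta-\theta^*\|$ and compute
\[
\frac{\langle h(X,\theta)-\mathbb{E}[h(X,\theta)],v\rangle}{\|\theta-\theta^*\|}=\langle w,v\rangle\langle w,u\rangle-\langle v,u\rangle=Z_1Z_2-\rho,
\]
where $(Z_1,Z_2)$ is bivariate Gaussian with correlation $\rho=\langle v,u\rangle\in[-1,1]$. The cleanest way forward is to diagonalize the quadratic form by setting $A=Z_1+Z_2$ and $B=Z_1-Z_2$, which are independent with variances $2(1\pm\rho)$. After rescaling to standard Gaussians, $Z_1Z_2-\rho$ becomes a difference of two independent scaled centered $\chi^2_1$ variables. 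Using the standard sub-exponential bound $\mathbb{E}[e^{s(U-1)}]\le e^{2s^2}$ for $|s|\le 1/4$ on each piece and the usual rule for summing independent sub-exponentials ($\sigma^2$ adds, $\alpha$ takes the max), one gets effective scale parameters bounded by $\sqrt{(1+\rho)^2+(1-\rho)^2}\le 2\sqrt{2}$ and $2(1+|\rho|)\le 4$; inflating to the uniform-in-$\rho$ values $(\sqrt{8},8)$ stated in the lemma. I expect this to be the most delicate step, mostly because I must keep track of how the scaling by $(1\pm\rho)/2$ propagates through the sub-exponential parameters and justify the reduction to the worst case over $\rho$.

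For part (3), since $\nabla f(X_i,\theta)-\nabla f(X_i,\theta')=w_iw_i^{\top}(\theta-\theta')$, the supremum over $\theta\neq \theta'$ is exactly the operator norm of the sample second-moment matrix, which I bound crudely by the trace:
\[
\Bigl\|\tfrac{1}{n}\textstyle\sum_{i=1}^n w_iw_i^{\top}\Bigr\|_{\mathrm{op}}\le \tfrac{1}{n}\sum_{i=1}^n\|w_i w_i^{\top}\|_{\mathrm{op}}=\tfrac{1}{n}\sum_{i=1}^n\|w_i\|^2.
\]
The quantity $\sum_i\|w_i\|^2$ is distributed as $\chi^2_{nd}$, so the Laurent--Massart tail bound $\Prob\{\chi^2_m\ge m+2\sqrt{mt}+2t\}\le e^{-t}$ applied with $m=nd$ and $t=\log(4/\delta)$, after dividing by $n$ and using $n\ge 1$ to drop the $1/n$ and $1/\sqrt{n}$ factors, produces the claimed (not tight, but $n$-independent and non-increasing) bound $M'(\delta)=d+2\sqrt{d\log(4/\delta)}+2\log(4/\delta)$ with probability at least $1-\delta/4\ge 1-\delta/3$.
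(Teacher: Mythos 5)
Your proposal is correct, and while part (1) coincides with the paper's argument (same product-of-Gaussians MGF computation and the bound $1-\lambda^2\ge e^{-2\lambda^2}$), parts (2) and (3) take genuinely different routes. For part (2), the paper decomposes $\theta-\theta^*$ into components parallel and perpendicular to $v$, writes the quantity as $\sqrt{\gamma}(\Iprod{w}{v}^2-1)+\sqrt{\eta}\Iprod{w}{v_\perp}\Iprod{w}{v}$, and splits the MGF by Cauchy--Schwarz; you instead diagonalize the bilinear form $Z_1Z_2$ via the rotation $A=Z_1+Z_2$, $B=Z_1-Z_2$ into a difference of two \emph{independent} scaled centered $\chi^2_1$ variables, so the MGF factors exactly and no Cauchy--Schwarz step is needed. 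Your route yields the sharper intermediate bound $e^{\lambda^2(1+\rho^2)}$ valid for $|\lambda|\le 1/4$, which dominates the required $(\sigma_2,\alpha_2)=(\sqrt{8},8)$ with room to spare (the degenerate case $\rho=\pm1$ is handled trivially since the corresponding term vanishes). For part (3), the paper invokes Gaussian matrix singular-value concentration to get $M'=\frac{1}{n}(\sqrt{n}+\sqrt{d}+\sqrt{2\log(4/\delta)})^2$, which improves with $n$ but does not literally match the expression in the lemma statement; your cruder bound $\|\frac1n\sum_i w_iw_i^\top\|\le\frac1n\sum_i\|w_i\|^2\sim\frac1n\chi^2_{nd}$ combined with the Laurent--Massart tail is $n$-independent (hence non-increasing in $n$, as Assumption \ref{ass:bounded_hessian} requires) and actually reproduces the stated constant $d+2\sqrt{d\log(4/\delta)}+2\log(4/\delta)$ exactly. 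The trade-off is that the paper's $M'$ is much smaller when $n\gg d$, which matters only through the logarithmic dependence on $M\vee M'$ in $\Delta_2$, so either version suffices for the downstream results.
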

\begin{proof}

We first check Assumption \ref{ass:gradient_sub_gaussian}.
Recall that $\nabla f(X, \theta) =w \langle w, \theta-\theta^* \rangle - w \zeta$,
where $w \sim \calN(0, \identity)$ and $\zeta \sim \calN(0,1)$ is independent of $w$.
Hence, $\nabla f(X, \theta^*) = - w \zeta$.
It follows that for any $v$ in unit sphere $B$,
$$\iprod{\nabla f(X, \theta^*) } { v } = -\zeta \iprod{w}{v}.$$
Because $w \sim \calN(0, \identity)$ and are independent of $\zeta$, it holds that $\iprod{w}{v} \sim \calN(0, 1)$ and is independent of $\zeta$. Thus, to compute $\expect{ \exp \left( -\lambda \zeta \iprod{w}{v} \right) }$, we can use the standard conditioning argument. In particular, for $\lambda^2<1$,
\begin{align}
\label{conditioning}
\nonumber
\expect{\exp \left( \lambda \iprod{\nabla f(X, \theta^*) } { v } \right) }
\nonumber
&=\expect{ \exp \left( -\lambda \zeta \iprod{w}{v} \right) }\\
&=\expect{ \expect{\exp \left( -\lambda y \iprod{w}{v} \right) |\zeta=y } },
\end{align}
where the expectation of $\expect{\exp \left( -\lambda y \iprod{w}{v} \right) |\zeta=y } $ is taken over the conditional distribution of $ \iprod{w}{v}$ conditioning on $\zeta$  being $y$. Since $\iprod{w}{v}$ and $\zeta$ are independent of each other, the conditional distribution of $ \iprod{w}{v}$ w.\ r.\ t.\ $\zeta$ is the same as the unconditional distribution of $ \iprod{w}{v}$, which is a Gaussian distribution. Thus, we can apply the moment generating function of Gaussian distribution to get
\begin{align*}
\expect{\exp \left( -\lambda y \iprod{w}{v} \right) |\zeta=y } = \exp \left( \lambda^2  y^2 /2 \right).
\end{align*}
Then, the right-hand side of \eqref{conditioning} becomes
\begin{align}
\label{conditioning 1}
\nonumber
\expect{\exp \left( \lambda \iprod{\nabla f(X, \theta^*) } { v } \right) }
\nonumber
&=\expect{ \expect{\exp \left( -\lambda y \iprod{w}{v} \right) |\zeta=y }}\\
\nonumber
&=\expect{\exp \left( \lambda^2 {\zeta}^2 /2 \right)  }\\
&\overset{(a)}{=}(1-\lambda^2)^{-1/2},
\end{align}
%
where equality $(a)$ follows from the moment generating function of $\chi^2$ distribution, \ie,
$$\expect{\exp \left( t \zeta^2 \right)  } = (1-2t)^{-1/2}~~~ \text{for}~~t < 1/2. $$

Using the fact that $1-\lambda^2 \ge e^{-2\lambda^2}$ for $\lambda^2 \le 1/2$,
it follows that
$$
\expect{\exp \left( \lambda \iprod{\nabla f(X, \theta) } { v } \right) } \le e^{\lambda^2}, \quad \forall |\lambda| \le \frac{1}{\sqrt{2}}.
$$
Thus Assumption \ref{ass:gradient_sub_gaussian} holds with $\sigma_1=\sqrt{2}$
and $\alpha_1=\sqrt{2}$.

Next, we verify Assumption \ref{ass:bounded_hessian}.
Note that $\nabla^2 f(X, \theta) =ww^\top$ and hence it suffices to
show that
\begin{align*}
 \prob{ \left\|\frac{1}{n} \sum_{i=1}^n \nabla^2 f(X_i, \theta)  \right\| \le M' } &=\prob{ \left\| \frac{1}{n} \sum_{i=1}^n  w_i w_i^\top \right\| \le M' } \ge 1- \frac{\delta}{3},
\end{align*}
for some $M'$ depending on $n,$ $d,$ and $\delta$.

Let $W=[w_1, w_2, \ldots, w_n]$ denote the $d \times n$ matrix
whose columns are given by $w_i$'s. Then
$\sum_{i=1}^n  w_i w_i^\top = W W^\top$.
Also, the spectral norm of $WW^\top$ equals $\|W\|^2$.
Therefore,
$$
\prob{ \left\| \frac{1}{n} \sum_{i=1}^n  w_i w_i^\top \right\| \le M' }
=\prob{ \left\| W \right\| \le \sqrt{n M'} }.
$$
Note that $W$ is an $d \times n$ matrix with i.i.d.\ standard Gaussian entries.
Standard Gaussian matrix concentration inequality (see, e.g., \cite[Corollary 5.35]{vershynin2010nonasym}) states
that for every $t \ge 0$,
$$
\prob{ \| W\| \le \sqrt{n} + \sqrt{d} + t } \ge 1- \exp(-t^2/2).
$$
Plugging $t = \sqrt{2 \log (4/\delta)}$  and setting
$$
M' = \frac{1}{n} \left( \sqrt{n} + \sqrt{d} + \sqrt{ 2 \log (4/\delta)} \right)^2
$$
complete the proof.


Finally, we verify Assumption \ref{ass:gradient_sub_exp}. Recall
that the gradient difference $h(X,\theta)$ is given by
$
h(X, \theta)= w \langle w, \theta-\theta^* \rangle,
$
and $\expect{h(X, \theta)}=\theta-\theta^*$.
It follows that for any vector $v$ in unit sphere $B$,
$$
\iprod{h(X, \theta)-\expect{ h(X, \theta)}}{ v} = \Iprod{w} {\theta-\theta^*}  \Iprod{w}{v} - \Iprod{\theta-\theta^*}{v}.
$$
For a fixed $\theta \in \Theta$ with $\theta \neq \theta^*$ and let $\tau=\| \theta-\theta^\ast\|>0$.
Then we have the following orthogonal decomposition:
$\theta-\theta^* = \sqrt{\gamma} v + \sqrt{\eta}v_{\perp}$,
where $\gamma + \eta = \tau^2$, and $v_{\perp}$ denote an vector perpendicular to $v$.
It follows that
$$
\Iprod{w} {\theta-\theta^*}  \Iprod{w}{v}  - \Iprod{\theta-\theta^*}{v}
=\sqrt{\gamma} \Iprod{w}{v}^2 -\sqrt{\gamma} + \sqrt{\eta} \Iprod{w}{v_{\perp}} \Iprod{w}{v}.
$$
It is easy to see that random variables $ \Iprod{w}{v_{\perp}} \sim \calN(0,1)$ and $\Iprod{w}{v} \sim \calN(0,1)$ are jointly Gaussian. In addition, we have
\begin{align*}
\expect{\Iprod{w}{v_{\perp}}\Iprod{w}{v}} &= \expect{v_{\perp}^\top w w^\top v} \\
&= v_{\perp}^\top \expect{w w^\top}  v = v_{\perp}^\top \mathbf{I}  v =0.
\end{align*}
Thus, $ \Iprod{w}{v_{\perp}} \sim \calN(0,1)$ and $\Iprod{w}{v} \sim \calN(0,1)$ are mutually independent.

For any $\lambda$ with $\lambda \sqrt{\gamma}<1/4$ and $\lambda^2\eta<1/4$,
\begin{align*}
&\expect{ \exp \left(  \lambda \Iprod{h(X, \theta)-\expect{ h(X, \theta)} }{v} \right) } \\
&=
\expect{\exp \left( \lambda \sqrt{\gamma} \left( \Iprod{w}{v}^2 -1 \right) +  \lambda \sqrt{\eta} \Iprod{w}{v_{\perp}} \Iprod{w}{v}  \right) }  \\
& \le \sqrt{ \expect{ e^{2 \lambda   \sqrt{\gamma} \left(  \Iprod{w}{v}^2  -1 \right) } } \expect{ e^{2 \lambda  \sqrt{\eta} \Iprod{w}{v_{\perp}} \Iprod{w}{v} } } } \\
&=e^{-\lambda \sqrt{\gamma}} \sqrt{ \expect{ e^{2 \lambda   \sqrt{\gamma} \left(  \Iprod{w}{v}^2 \right) }}} \sqrt{\expect{ e^{2 \lambda  \sqrt{\eta} \Iprod{w}{v_{\perp}} \Iprod{w}{v} } } }\\
& =  e^{-\lambda \sqrt{\gamma}} \left( 1 - 4 \lambda \sqrt{\gamma} \right)^{-1/4}   \left( 1 - 4 \lambda^2 \eta  \right)^{-1/4},
\end{align*}
where the first inequality holds due to Cauchy-Schwartz's inequality, and the last equality follows by
plugging in the moment generating functions for $\chi^2$ distributions as well as using the conditioning argument that is similar to the derivation of \eqref{conditioning}.

Using the fact that $e^{-t}/\sqrt{1-2t} \le e^{2t^2}$ for $|t| \le 1/4$
and $1-t \ge e^{-4t}$ for $0 \le t \le 1/2$ ,
it follows that for $\lambda^2 \le 1/(64\tau^2)$,
\begin{align*}
\expect{ \exp \left(  \lambda \Iprod{h(X, \theta)-\expect{ h(X, \theta)} }{v} \right) }
& \le \exp \left( 4\lambda^2 (\gamma+ \eta) \right) \\
& \le \exp \left( 4\lambda^2 \tau^2 \right).
\end{align*}
Hence, Assumption \ref{ass:gradient_sub_exp} holds with $\sigma_2=\sqrt{8}$ and $\alpha_2=8$.

\end{proof}

Thus, according to Theorem \ref{thm:main_informal}, our Byzantine Gradient Descent method can robustly solve the linear regression problem exponentially fast with high probability -- formally stated the following corollary.

\begin{corollary}[Linear regression]
\label{linear regression conv}
Under the aforementioned least-squares model for linear regression, assume  $\Theta \subset  \{ \theta: \| \theta-\theta^*\| \le r\sqrt{d} \}$ for $r>0$ such that  $\log r = O(d \log (N/k))$.
Suppose that $2(1+\epsilon)q \le k \le m $.
Fix any  $\alpha \in (q/k,1/2)$ and any $\delta>0$ such that $\delta \le \alpha-q/k$ and $\log(1/\delta)=O(d)$, there exist universal constants $c_1, c_2>0$ such that if $N/k \ge c_1 C_\alpha^2 d \log (N/k)$. Then with probability at least $1-\exp(- k D( \pth{\alpha -q/k} \| \delta ) )$, the iterates $\{\theta_t\}$
given by Algorithm \ref{alg:BGD} with $\eta=1/2$ satisfy
$$
\| \theta_t - \theta^* \| \le \left(\frac{1}{2} + \frac{\sqrt{3}}{4} \right)^t \| \theta_0 - \theta^* \| + c_2 C_\alpha \sqrt{ \frac{dk}{N} }, \quad \forall t \ge 1.
$$
\end{corollary}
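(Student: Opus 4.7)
The plan is to obtain \prettyref{linear regression conv} as a direct specialization of \prettyref{cor:main} to the least-squares model, so the work reduces to verifying the hypotheses of that theorem and tracking the resulting constants.

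First I would verify \prettyref{ass:pop_risk_smooth}. The population risk has already been computed in this section as $F(\theta) = \tfrac{1}{2}\|\theta-\theta^*\|^2 + \tfrac{1}{2}$, giving $\nabla F(\theta)=\theta-\theta^*$. Hence $F$ is $1$-strongly convex with $1$-Lipschitz gradient, so \prettyref{ass:pop_risk_smooth} holds with $L=M=1$; both are $\Theta(1)$ and the step size is $\eta=L/(2M^2)=1/2$, matching the statement.

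Next I would invoke \prettyref{lmm:least-squares} to verify \prettyref{ass:gradient_sub_gaussian}--\prettyref{ass:bounded_hessian}. That lemma already supplies $\sigma_1=\alpha_1=\sqrt{2}$ and $\sigma_2=\sqrt{8}$, $\alpha_2=8$, all $\Theta(1)$, and $M'(\delta) = d + 2\sqrt{d\log(4/\delta)} + 2\log(4/\delta)$. Under the corollary's hypothesis $\log(1/\delta)=O(d)$, this yields $M'=O(d)$, so $\log M' = O(\log d)$, which is exactly the form required by \prettyref{cor:main}. The remaining structural hypotheses of \prettyref{cor:main} -- namely $\Theta \subset \{\theta: \|\theta-\theta^*\| \le r\sqrt{d}\}$ with $\log r = O(d \log(N/k))$, $2(1+\epsilon)q \le k \le m$, $\alpha \in (q/k, 1/2)$, $\delta \le \alpha - q/k$, and $N/k \ge c_1 C_\alpha^2 d \log(N/k)$ -- are carried over verbatim from the corollary statement.

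Finally I would specialize the conclusion of \prettyref{cor:main}. With $L=M=1$ the contraction factor becomes
\[
\tfrac{1}{2} + \tfrac{1}{2}\sqrt{1 - L^2/(4M^2)} \;=\; \tfrac{1}{2} + \tfrac{1}{2}\sqrt{3/4} \;=\; \tfrac{1}{2} + \tfrac{\sqrt{3}}{4},
\]
the failure probability $\exp(-kD(\alpha - q/k\,\|\,\delta))$ is inherited directly, and the residual statistical error produced by \prettyref{cor:main} is $O(\sqrt{dk/N})$ with a dependence on $C_\alpha$ that is made explicit by keeping track of the $C_\alpha$ factor inside $\xi_1 = 4 C_\alpha \Delta_1(N/k)$ from \prettyref{thm:main}; tracing this factor through the telescoping bound of \prettyref{lmm:noisy_grad} yields precisely the $c_2 C_\alpha \sqrt{dk/N}$ stated in the corollary.

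There is no genuine obstacle: all the hard analytical work was completed upstream in \prettyref{thm:main}, in the uniform convergence result \prettyref{thm:unif_convg}, and in \prettyref{lmm:least-squares}. The only point requiring a moment's care is confirming that the $\delta$-dependence inside $M'$ satisfies $\log M' = O(\log d)$ under $\log(1/\delta)=O(d)$, and that the factor $C_\alpha$ is retained in the final error term rather than absorbed into a universal constant; both are bookkeeping rather than substance.
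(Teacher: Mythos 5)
Your proposal is correct and follows exactly the route the paper intends: verify Assumption \ref{ass:pop_risk_smooth} with $L=M=1$ from the explicit form $F(\theta)=\tfrac{1}{2}\|\theta-\theta^*\|^2+\tfrac{1}{2}$, invoke Lemma \ref{lmm:least-squares} for Assumptions \ref{ass:gradient_sub_gaussian}--\ref{ass:bounded_hessian}, check $\log M' = O(\log d)$, and specialize Theorem \ref{cor:main} to get the contraction factor $\tfrac{1}{2}+\tfrac{\sqrt{3}}{4}$ and the $c_2 C_\alpha\sqrt{dk/N}$ error term. Your remark about keeping the $C_\alpha$ factor explicit rather than absorbing it into the universal constant is a sensible bookkeeping point that the paper itself glosses over in the proof of Theorem \ref{cor:main}.
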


Note that in Corollary \ref{linear regression conv}, we assume the ``searching space" $\Theta$ belongs to some range, which may grow with $d$ and $N/k$. This assumption is rather mild since in practice; we typically do have some prior knowledge about the range of $\theta^*$.


\section{Related Work}
\label{sec: related work}
The present paper intersects with two main areas of research:
statistical machine learning and distributed computing.
Most related to our work is \cite{Blanchard2017} that we became aware of when preparing this paper. It also studies distributed optimization in adversarial settings, but the setup is different from ours. In particular, their focus is solving an optimization problem, where all
$m$ working machines have access to a common dataset $\{x_i\}_{i=1}^N$
and the goal is to collectively compute the minimizer $\hat{\theta}$ of the average
cost $Q(\theta) = (1/N) \sum_{i=1}^N f(x_i, \theta)$.  Importantly, the dataset $\{x_i\}_{i=1}^N$ are assumed to be deterministic.
In contrast, we adopt the standard statistical learning framework, where each working machine only has access to its own data samples, which are assumed to be generated from some unknown distribution $\mu$, and the goal is to estimate the optimal model parameter $\theta^*$ that minimizes the true prediction error
$\mathbb{E}_{X \sim \mu} [ f(X, \theta)]$ --- as mentioned, characterizing the statistical estimation accuracy is a main focus of ours. Our algorithmic approaches and main results are also significantly different.
The almost sure convergence is proved in~\cite{Blanchard2017} without an explicit characterization of convergence speed nor the estimation errors.

Our work is also closely related to the literature on
robust parameter estimation using geometric median.
It is shown in~\cite{lopuhaa1991breakdown} that geometric median has a breakdown point of $0.5$, that is, given a collection
of $n$ vectors in $\reals^d$, at least $\lfloor (n+1)/2 \rfloor /n$ number of points
needs to be corrupted in order to arbitrarily perturb the geometric median. A more quantitative
robustness result is recently derived in~\cite[Lemma 2.1]{minsker2015geometric}.
The geometric median has been applied to distributed machine learning under the one-shot aggregation framework~\cite{feng2014distributed},  under the restrictive
assumption that the number of data available in each working machine satisfies
$N/m \gg d$.  While we also apply geometric median-of-mean as a sub-routine, our problem setup, overall algorithms and main results are completely different.

A recent line of work~\cite{diakonikolas2016robust,lai2016agnostic} presents polynomial algorithms to consistently estimate the mean and
 covariance of a distribution
from $N$ i.i.d.\ samples in $\reals^d$, in the presence of an $\epsilon$ fraction of malicious errors for sufficiently small $\epsilon$,
while geometric median is proved to fail when $\epsilon = \Omega(1/\sqrt{d})$.  However, it is unclear how to directly apply their results
to our gradient descent setting, where our goal is to robustly estimate a $d$-dimensional gradient function from i.i.d. sample gradient
functions.

On the technical front, a crucial step in our convergence proof is to show the
geometric median of means of $n$ i.i.d.\  random gradients converges to the underlying gradient function $\nabla F(\theta)$ uniformly over $ \theta $. Our proof builds on several ideas from the empirical process theory, which guarantees uniform convergence of
the empirical risk function $(1/n)\sum_{i=1}^n f(X_i, \cdot)$ to the population risk $F(\cdot)$.
However, what we need is the uniform convergence of empirical \emph{gradient} function
$(1/n) \sum_{i=1}^n \nabla f(X_i, \cdot)$, as well as its \emph{geometric median} version, to the population gradient function $\nabla F(\cdot)$.
To this end, we use concentration inequalities to first establish point-wise convergence
and then boost it to uniform convergence via the celebrated $\epsilon$-net argument. Similar ideas have been used recently in the work~\cite{mei2016landscape}, which studies the stationary points of the empirical risk function.

\section{Discussion}
\label{sec: discussion}

In this paper, we consider the machine learning scenario where the model is trained in an unsecured environment. As a result of this, the model training procedure needs to be robust to adversarial interruptions. Based on the geometric median of means, we propose a communication-efficient and robust method for the parameter server to aggregate the gradients reported by the unreliable workers. In each iteration, the parameter server first groups the received gradients into non-overlapping batches to increase the ``similarity" of the Byzantine-free batches; and then takes the median of the batch gradients to cripple the interruption of Byzantine machines.

There are many other interesting directions. We list a few of them as follows.
\begin{itemize}
\item As mentioned in the introduction, Federated Learning is proposed due to the users' concerns about privacy breaches. In Federated Learning, the training data is kept locally on user's devices, which indeed grants users the control of their data. Nevertheless, to have a high-quality model trained, information about their data need to be extracted. In our future work, we would like to provide a precise characterization of the minimal amount of privacy has to be sacrificed in the Federated Learning paradigm.

\item In addition to security, low volume of local data and communication constraints, there are many other practical challenges such as intermittent availability of mobile phones, i.e., communication asynchrony. Although our algorithm only needs $\log (N)$ rounds, a single synchronous round may be significantly ``delayed" by the slow machines. We would like to adapt our algorithms to the asynchronous setting.

\item In Byzantine fault models, we assume the Byzantine adversaries know the realization of the random bits generated by the parameter server.
Depending on the applications, this assumption can possibly be relaxed, which may lead to simpler algorithms.
A simple idea to defend against the relaxed Byzantine faults is to selects a subset of received gradients at each iteration and then takes the average over the selected gradients.
One selection rule is random selection and another one is to select the gradients of the small $\ell_2$ norms. It would be interesting to investigate the performance of
these two selection rules and compare them with the geometric median.

\end{itemize}

\appendix

\section{Proof of Lemma 2.1}
\label{pf_geometric_median_robust}
\begin{proof}
Let $S=\{i : \| z_i \| \le r\}$. For any $i \in S$, we have
$$
\| z_\ast  - z_i \| \ge \|z_\ast \| - \|z_i \| \ge \|z_\ast \| - 2 r + \|z_i \|.
$$
Moreover, by triangle's inequality, for all $i \notin S$, we have
$$
\| z_\ast - z_i \| \ge \| z_i \| - \| z_\ast \|
$$
Combining the last two displayed equations yields that
$$
\sum_{i=1}^n \| z_\ast - z_i \| \ge \sum_{i=1}^n \|z_i \| + (2 |S| - n ) \|z_\ast\| -2 |S| r.
$$
Since $z_\ast$ is a $(1+\gamma)$-approximate solution of $
\sum_{i=1}^n \| z -z_i \|$, it follows that
$$
\sum_{i=1}^n \|z_i \| + (2 |S| - n ) \|z_\ast\| -2 |S| r \le (1+\gamma) \min_{z} \| z - z_i \|.
$$
Note that $\sum_{i=1}^n \|z_i \|  = \sum_{i=1}^n \|0-z_i \| \ge  \min_{z} \sum_{i=1}^n  \| z - z_i \|$. Hence, it further
implies that
$$
(2 |S| - n ) \|z_\ast\| -2 |S| r \le \gamma  \min_{z}\sum_{i=1}^n   \| z - z_i \|,
$$
and thus
\begin{align*}
 \|z_\ast\| & \le \frac{2 |S| r}{2|S| - n} + \frac{\gamma   \min_{z} \sum_{i=1}^n  \| z - z_i \| }{ 2 |S| - n} \le \frac{2(1-\alpha) r }{ 1-2\alpha } + \frac{\gamma  \min_{z} \sum_{i=1}^n  \| z - z_i \| }{(1-2\alpha) n},
\end{align*}
where the last inequality holds due to $|S| \ge (1-\alpha)n$ by the assumption.
\end{proof}

\section{Proof of Lemma 3.2}
\label{pf app convergence}
\begin{proof}
By \eqref{eq:pop_gd} and the fact that $\nabla F(\theta^*)=\vct{0}$, we have
\begin{align*}
\| \theta' - \theta^*\|^2&=\| \theta - \theta^* - \eta  \nabla F(\theta) \|^2\\
& =\left\| \theta - \theta^* - \eta \left( \nabla F(\theta) - \nabla F(\theta^*) \right) \right\|^2 \\
& = \left\| \theta - \theta^* \right\|^2 + \eta^2 \left\| \nabla F(\theta) - \nabla F(\theta^*) \right\|^2 - 2\eta \iprod{\theta - \theta^*}{ \nabla F(\theta) - \nabla F(\theta^*)}.
\end{align*}
By Assumption \ref{ass:pop_risk_smooth}, we have
$$
\left\| \nabla F(\theta) - \nabla F(\theta^*) \right\| \le M \| \theta - \theta^* \|,
$$
$$
F(\theta) \ge  F(\theta^*) + \iprod{\nabla F(\theta^*)}{\theta - \theta^*} + \frac{L}{2} \| \theta - \theta^* \|^2,
$$
and
$$
F(\theta^*) \ge F(\theta) + \iprod{\nabla F(\theta)}{ \theta^*-\theta}.
$$
Summing up the last two displayed equations yields that
$$
0 \ge \iprod{\nabla F(\theta) - \nabla F(\theta^*) }{ \theta^*-\theta} + \frac{L}{2} \| \theta - \theta^* \|^2.
$$
Therefore,
$$
\| \theta' - \theta^*\|^2
\le \left( 1+ \eta^2 M^2 - \eta L \right) \left\| \theta - \theta^* \right\|^2.
$$
The conclusion follows by the choosing $\eta=L/2M^2$.
\end{proof}

\section{Concentration Inequality for Sub-exponential Random Variables}
\label{subexpon prelim}
\begin{definition}[Sub-exponential]
	Random variable X with mean $\mu$ is sub-exponential if $\exists\ \nu > 0$ and $\alpha > 0 $ such that
	$$
	\expect{ \exp \left( \lambda(X-\mu) \right) } \leq \exp \left( \frac{ \nu^2 \lambda^2}{2} \right), \quad  \forall |\lambda| \leq \frac{1}{\alpha}.
	$$
\end{definition}
\begin{theorem} \label{thm:sum_sub_exponential}
If $X_{1}, \ldots, X_{n}$ are independent random variables where $X_{i}$'s are sub-exponential
with scaling parameters $(\nu_{i}, \alpha_{i})$ and mean $\mu_{i}$, then $\sum_{i=1}^{n}X_{i}$ is sub-exponential with scaling parameters $(\nu_{\ast},\alpha_{\ast})$, where $\nu_{\ast}^{2} = \sum_{i=1}^{n}\nu_{i}^{2}$ and $\alpha_{\ast} = max_{1\leq i \leq n}\alpha_{i}$. Moreover,
\begin{align*}
\prob{ \sum_{i=1}^{n} \left(X_{i}- \mu_{i} \right) \geq t } \leq
	\begin{cases}
		\exp \left( -t^{2} /( 2\nu_{\ast}^{2} ) \right)  & \text{ if }  0 \le t \le \nu_{\ast}^{2}/\alpha_{\ast} \\
		\exp \left( -t/ ( 2\alpha_{\ast} ) \right) &  \text{ o.w. }	
		\end{cases}
\end{align*}
\end{theorem}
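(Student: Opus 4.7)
The plan is to prove both assertions directly from the definition of sub-exponentiality via the moment generating function (MGF), and then apply a Chernoff-type argument with a carefully chosen value of $\lambda$ depending on whether $t$ is small or large.

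First I would establish the sub-exponentiality of $Y \triangleq \sum_{i=1}^n (X_i - \mu_i)$ with parameters $(\nu_*, \alpha_*)$. By independence of the $X_i$,
\begin{align*}
\Expect\qth{\exp(\lambda Y)} = \prod_{i=1}^n \Expect\qth{\exp(\lambda(X_i - \mu_i))}.
\end{align*}
For any $\lambda$ with $|\lambda| \le 1/\alpha_* = \min_i 1/\alpha_i$, each factor is bounded by $\exp(\nu_i^2 \lambda^2 / 2)$ by the sub-exponential hypothesis on $X_i$, so
\begin{align*}
\Expect\qth{\exp(\lambda Y)} \le \exp\pth{\frac{\lambda^2}{2}\sum_{i=1}^n \nu_i^2} = \exp\pth{\frac{\nu_*^2 \lambda^2}{2}},
\end{align*}
which is exactly the sub-exponential condition with scaling parameters $(\nu_*, \alpha_*)$.

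Next I would derive the two-regime tail bound by Chernoff's inequality. For any $\lambda \in [0, 1/\alpha_*]$,
\begin{align*}
\prob{Y \ge t} \le \exp(-\lambda t)\, \Expect\qth{\exp(\lambda Y)} \le \exp\pth{-\lambda t + \frac{\nu_*^2 \lambda^2}{2}}.
\end{align*}
The unconstrained minimizer of the right-hand side is $\lambda = t/\nu_*^2$. In the regime $0 \le t \le \nu_*^2/\alpha_*$ this minimizer lies in the admissible interval $[0,1/\alpha_*]$, and plugging it in yields the Gaussian-type bound $\exp(-t^2/(2\nu_*^2))$.

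The only delicate step is the large-deviation regime $t > \nu_*^2/\alpha_*$, where the unconstrained optimum exceeds $1/\alpha_*$. Here I would choose $\lambda = 1/\alpha_*$ (the boundary), giving
\begin{align*}
\prob{Y \ge t} \le \exp\pth{-\frac{t}{\alpha_*} + \frac{\nu_*^2}{2\alpha_*^2}}.
\end{align*}
Using $t > \nu_*^2/\alpha_*$ we have $\nu_*^2/(2\alpha_*^2) < t/(2\alpha_*)$, so the exponent is bounded above by $-t/(2\alpha_*)$, which is the claimed rate. The main obstacle---or rather the only place requiring care---is this boundary choice of $\lambda$ and the algebraic check that the quadratic term is absorbed into half of the linear term under the condition $t > \nu_*^2/\alpha_*$; everything else is a routine consequence of independence and the MGF definition.
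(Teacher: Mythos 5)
Your proof is correct: the factorization of the moment generating function over the common range $|\lambda|\le 1/\alpha_* = \min_i 1/\alpha_i$, the Chernoff bound optimized at $\lambda = t/\nu_*^2$ in the small-$t$ regime, and the boundary choice $\lambda = 1/\alpha_*$ with the absorption $\nu_*^2/(2\alpha_*^2) < t/(2\alpha_*)$ in the large-$t$ regime are all sound. Note that the paper itself states this theorem without proof, as a standard fact about sub-exponential random variables; your argument is exactly the classical one that such references use, so there is nothing to reconcile.
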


\section*{Acknowledgement}
Y. Chen was partially supported by the National Science Foundation under CRII award 1657420 and grant CCF-1704828, and by the School of Operations Research and Information Engineering at Cornell University. L. Su was partially supported by the National Science Foundation Grant ECCS-1610543 and NSF Science \& Technology Center for Science of Information Grant CCF-0939370.

\bibliographystyle{alpha}
\bibliography{RobGrad}

\end{document}